\documentclass[letterpaper,UKenlish,cleveref, autoref]{lipics-v2021}

\newif\ifanon
\anonfalse

\hideLIPIcs 
\title {An Approximation Notion between {\sf P} and {\sf FPTAS}}

\author
{Samuel Bismuth}
{Department of Computer Science, Ariel University, Ariel 40700, Israel}
{samuelbismuth101@gmail.com}
{https://orcid.org/0000-0003-3471-5402}
{Israel Science Foundation grant no. 712/20.}

\author
{Erel Segal-Halevi}
{Department of Computer Science, Ariel University, Ariel 40700, Israel}
{erelsgl@gmail.com}
{https://orcid.org/0000-0002-7497-5834}
{Israel Science Foundation grant no. 712/20, 1092/24.}

\Copyright{Samuel Bismuth and Erel Segal-Halevi}

\authorrunning{Bismuth and Segal-Halevi}

\ccsdesc[500]{Mathematics of computing~Combinatorial algorithms}
\keywords{{\sf FPTAS}, algorithm, complexity, combinatorial problems, approximation}

\category{}

\nolinenumbers 

\usepackage[vskip=2mm,font=itshape]{quoting}
\usepackage{algorithm,algpseudocode}
\usepackage{changepage}   

\newcommand{\len}{\operatorname{len}}

\newcommand{\bx}{\mathbf{x}}
\newcommand{\by}{\mathbf{y}}
\newcommand{\br}{\mathbf{r}}

\newcommand{\optint}{\ensuremath{\operatorname{OptInt}}}
\newcommand{\optfrac}{\ensuremath{\operatorname{OptFrac}}}
\newcommand{\rep}[2]{\ensuremath{{#1}^{* #2}}}
\newcommand{\fint}{\ensuremath{\operatorname{F}}}

\newcommand{\conv}{\ensuremath{\operatorname{conv}}}

\newcommand{\Q}{\mathbb{Q}}
\newcommand{\Z}{\mathbb{Z}}

\newtheorem{open}{Open Question}

\begin{document}

\fontsize{11}{13.5}\selectfont 

\maketitle

\begin{abstract}
\fontsize{12}{13.5}\selectfont 
We present an approximation notion for NP-hard optimization problems. The notion is based on an \emph{amortized relaxation}: the relaxed optimum of an input is the largest per-copy value attainable when many copies of the input are solved together. We prove that (assuming P$\neq$NP) the new notion is strictly stronger than FPTAS, but strictly weaker than having a polynomial-time algorithm.
Our results introduce a new computational complexity class, which is a strict superset of P and a strict subset of FPTAS.
\end{abstract}
\newpage

\section{Introduction}
When an optimization problem is found to be {\sf NP}-hard, we  assume that it cannot be solved exactly by a polynomial-time algorithm, and look for polynomial-time approximation algorithms. The most efficient kind  of an approximation algorithm currently known is the FPTAS (Fully Polynomial Time Approximation Scheme): for any $\epsilon>0$, it finds a solution that is at least $(1-\epsilon)$ times the optimal solution (in case of a maximization problem),  and runs in time polynomial in the input size and $1/\epsilon$.
Schuurman and Woeginger write in their Approximation Schemes Tutorial \cite{schuurman2001approximation} that
\begin{quote}
``With respect to worst case approximation, an FPTAS is the strongest possible result that we
can derive for an NP-hard problem''.	
\end{quote}
In this paper we challenge this claim. We show a new (to the best of our knowledge) kind of approximation algorithm, that is stronger than FPTAS in a precise sense. We call it FFPTAS --- Fractional Fully Polynomial Time Approximation Scheme.

Instead of approximating the optimal solution value, an FFPTAS approximates the \emph{amortized optimum} of the problem: the largest per-copy value that can be attained when many independent copies of the input are solved together as a single instance. The amortized optimum is a relaxation of the optimum --- it is always at least as large --- and it plays the role that the optimum of a fractional relaxation plays in classical approximation algorithms.
For any $t>0$, it finds a solution with value at least $(1-t)$ times the amortized optimum (or asserts that such a solution does not exist), and runs in time polynomial in the input size and $1/t$.

We prove that (under certain conditions) FFPTAS is a strictly better approximation than FPTAS, that is: every problem that has an FFPTAS has an FPTAS, but the opposite is not true unless {\sf P=NP}. 
We complement this result by showing an FFPTAS for an NP-hard problem. 
Together, our results assert the existence of a new complexity class for optimization problems, that lies strictly between {\sf P} and {\sf FPTAS}.

\subsection{Motivation}
Our main motivation for studying FFPTAS is the theoretical discovery of a new complexity class, refining the complexity hierarchy of NP-hard optimization problems.

But FFPTAS might also have practical applications in situations in which a perfect solution is required, but can be attained only when allowing fractions.
As an example, consider the problem of dividing items of different values between two partners. It may be required by law to give each partner exactly $1/2$ of the total value. However, such a perfect partition may be impossible to attain if the items cannot be split.
A possible solution is to have the partner who received the higher value compensate the other partner by monetary payments; but the amount of monetary payments available might also be bounded.
For this problem, the amortized optimum equals exactly the perfect partition value (half the total value; see the proof of Part (2) of \Cref{thm:main}).
If there exists a partition in which the smallest sum is at least $(1-t)\cdot \optfrac$, where $\optfrac$ is the perfect partition value, then it is possible to balance the partition by a payment of size at most $t\cdot \optfrac$.

An FFPTAS may also be useful as a subroutine for solving more complex problems.
As an example, consider the problem of number partitioning with bounded splitting \cite{bismuth2024partitioning}.
In this problem, we have to compute a perfect fractional partition of items among $n$ bins, while splitting at most $s$ items. With $s=n-1$ a perfect partition always exists and is easy to find, but with $s<n-1$ a perfect partition might not exist. To decide if such a partition exists, we need to know if the optimal integral solution (with no splittings) is sufficiently close to the optimal fractional solution, such that the difference can be completed with fractions of a small number of items. 
We need to do this without explicitly computing the optimal integral solution, which is usually NP-hard.
This is exactly what an FFPTAS does. 
It is also clear here why an ordinary FPTAS would not suffice: an FPTAS only tells us how close a solution is to the discrete optimum, whereas what we need to know is how far the discrete optimum is from the fractional optimum.

\section{Definitions}
We state all our definitions and results in terms of maximization problems. We have analogous results for minimization problems.

\begin{definition}[Maximization problem]
\label{def:problem}
A \emph{maximization problem} is defined by:

(a) An input domain $D$ --- a set of the form $D = \bigcup\limits_{m\geq 1}X^m$ for some base set $X$;

(b) A function $\optint:~  D \to \Q_{\geq 0}$ 	of the form
\begin{align}
\label{eq:combinatorial-formulation}
\optint(\bx) := \max_{\by} g(\bx,\by) \text{~subject to~} \by\in \fint(\bx),
\end{align}
where $g:D \times D\to \Q_{\geq 0}$ is the \emph{objective function} and $\fint :D \to \mathcal{P}(D)$ is the set of  solutions $\by$ considered feasible for input $\bx$. 	
\end{definition}
In this paper usually $X = \Z_{> 0}$ --- the set of positive integers, so that $D$ is the set of all finite vectors of positive integers.
~
For an input $\bx\in D$, the value $\optint(\bx)$ is called the \emph{optimal value} of $\bx$.	

Note that an optimization problem can have many different representations, that is, there can be many different $g$ and $\fint$ that yield the same $\optint$ function. This does not affect our results, as our results are defined directly using the function $\optint$.

\begin{definition}
\label{def:optfrac}
(a)
For a vector $\bx = (x_1,\ldots,x_m)\in X^{m}$ and an integer $k\geq 1$, we denote by $\rep{\bx}{k} \in X^{k m}$ the \emph{$k$-fold replication} of $\bx$: the concatenation of $k$ copies of $\bx$,
\begin{align*}
\rep{\bx}{k} := (\underbrace{x_1,\ldots,x_m,~ x_1,\ldots,x_m,~ \ldots,~ x_1,\ldots,x_m}_{k \text{ copies}}).
\end{align*}

(b)
For a maximization problem $\optint$ and an input $\bx\in D$, the \emph{amortized optimum} (or \emph{relaxed optimum}) of $\bx$ is:
\begin{align*}
\optfrac(\bx) := \sup_{k \geq 1} \frac{\optint(\rep{\bx}{k})}{k}.
\end{align*}
\end{definition}

Taking $k=1$ in the supremum shows that, for any maximization problem,
\begin{align}
\label{eq:frac-geq-int}
\optfrac(\bx)\geq \optint(\bx)\geq 0 &&\text{for all $\bx\in D$}.
\end{align}
In this paper, we consider only problems for which $\optfrac(\bx)$ (hence also $\optint(\bx))$) is finite for all $\bx\in D$.

The intuition behind \Cref{def:optfrac} is that solving many copies of an instance together may attain a higher value per copy than solving each copy separately, since the discrete solutions of the different copies can complement each other. The per-copy value in the limit plays the role of a fractional optimum.
This is analogous to the way fractional parameters are defined in graph theory: the fractional chromatic number \cite{hilton1973colour,scheinerman1997fractional} is the limit of $\chi(G[k])/k$ over blow-ups $G[k]$ of the graph $G$, and the Shannon capacity of a graph \cite{shannon1956zero} is the limit of $\alpha(G^{\boxtimes k})^{1/k}$, where $G^{\boxtimes k}$ denotes the $k$-times strong power of $G$, and $\alpha(\cdot)$ denotes its independence number. In both cases, a ``fractional'' quantity is defined by amortizing an integral quantity over replicated instances.%
\footnote{
We remark that if $\optint$ is \emph{superadditive under replication}, that is, $\optint(\rep{\bx}{(j+k)}) \geq \optint(\rep{\bx}{j}) + \optint(\rep{\bx}{k})$ for all $j,k\geq 1$, then by Fekete's lemma \cite{fekete1923verteilung} the supremum in \Cref{def:optfrac} equals the limit $\lim_{k\to\infty} \optint(\rep{\bx}{k})/k$. This holds for the partition problems studied in this paper. Our results do not require superadditivity, so we work with the supremum.
}

We now define several classes of algorithms for maximization problems.
We denote by $\len(\bx)$ the number of bits in the binary representation of the vector $\bx$.

\begin{definition}
\label{def:algorithms}
Let $\optint$ be a maximization problem.
\begin{itemize}
\item A \emph{polynomial-time algorithm} for $\optint$ is an algorithm that accepts as input a vector $\bx\in D$ and returns as output the value $\optint(\bx)$, and runs in time polynomial in $\len(\bx)$.

\item A \emph{polynomial-time relaxation algorithm} for $\optint$ is an algorithm that accepts as input a vector $\bx\in D$ and returns as output the value $\optfrac(\bx)$, and runs in time polynomial in $\len(\bx)$.%
\footnote{
	There are many problems for which computing $\optint$ is NP-hard whereas computing $\optfrac$ is in \sf{P}; the partition problems in the proofs of Parts (2) and (4) of \Cref{thm:main} are examples.
	In  \Cref{sec:rel-harder-than-opt} we show that the converse is also possible: there are optimization problems in which computing $\optfrac$ is NP-hard whereas computing $\optint$ is in \sf{P}. 
}

\item An \emph{FPTAS} for $\optint$, denoted \emph{FPTAS[$\optint$]}, is an algorithm that, 
given input vector $\bx\in D$ and real number $\epsilon>0$,
runs in time polynomial in $\len(\bx)$ and $1/\epsilon$, and 
returns a value $v$ such that $(1 - \epsilon)\cdot \optint(\bx) \leq v \leq \optint(\bx)$.

\item An \emph{FFPTAS} for $\optint$, denoted \emph{FFPTAS[$\optint$]}, is an algorithm that, 
given input vector $\bx\in D$ and real number $t>0$,
runs in time polynomial in $\len(\bx)$ and $1/t$, and 
returns a value $v$ such that $(1 - t)\cdot \optfrac(\bx) \leq v \leq \optint(\bx)$ (if such a value exists, that is, if $\optint(\bx) \geq (1-t)\cdot\optfrac(\bx)$), or None (if no such value exists, that is, if $\optint(\bx) < (1-t)\cdot\optfrac(\bx)$).
\end{itemize}
\end{definition}






\begin{remark}
\label{rem:feasible-value}
In \Cref{def:algorithms}, the requirement $v\leq \optint(\bx)$ plays the role that ``$v$ is the value of a feasible solution'' plays in the classical definition of an FPTAS: since $\optint(\bx)$ is the optimal value, the value of any feasible solution is at most $\optint(\bx)$. When the problem arises from a combinatorial formulation such as \eqref{eq:combinatorial-formulation}, the definitions of FPTAS and FFPTAS can be strengthened such that the algorithms must return the value of an actual integral solution (and even the solution itself); the algorithms we construct in the proofs of \Cref{thm:main} satisfy this stronger requirement.
\end{remark}

\begin{definition}
A problem $\optint$ is called \emph{fractionally-polynomial} if it satisfies the following two conditions:

1. $\optfrac(\bx)$ can be computed in time polynomial in $\len(\bx)$ (in particular, it is finite and rational).

2. $\optint(\bx)$ can be \emph{represented} in space polynomial in $\len(\bx)$, that is, $\len(\optint(\bx))\leq  p(\len(\bx))$ for some polynomial function $p$, and $p$ itself can be computed in time polynomial in $\len(\bx)$.
\end{definition}

\section{Main theorem}

\begin{theorem}
\label{thm:main}
Let FracP be the set of fractionally-polynomial maximization problems.

(1) If a problem in FracP has a poly-time algorithm, then it has an FFPTAS.

(2) Some problem in FracP has an FFPTAS but no poly-time algorithm if {\sf P} $\neq$ {\sf NP}. 

(3) If a problem in FracP has an FFPTAS, then it has an FPTAS. 

(4) Some problem in FracP has an FPTAS but has no FFPTAS
unless {\sf P} $=$ {\sf NP}.
\end{theorem}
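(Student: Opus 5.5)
I would prove the four parts separately. Parts (1) and (3) are generic reductions. Parts (2) and (4) need concrete partition problems; I would take $X=\Z_{>0}$ and inputs $\bx=(x_1,\dots,x_m)$ with sum $S$.

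\textbf{Part (1).} Given $\bx$ and $t$, compute $\optint(\bx)$ with the poly-time algorithm and $\optfrac(\bx)$ with the FracP algorithm. If $\optint(\bx)\ge(1-t)\optfrac(\bx)$, return $v=\optint(\bx)$; otherwise return None. This is correct by the definition and runs in polynomial time.

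\textbf{Part (3).} Given $\bx$ and $\epsilon$, I would search for the largest integer $j$ such that the FFPTAS with $t=j\epsilon/2$ (or a geometric grid of values of $t$) returns a value. I would rather use binary search on $t$ driven by the representation bound. Since $\len(\optint(\bx))\le p(\len(\bx))$, the optimum is a rational with denominator at most $2^{p}$. Hence either $\optint(\bx)=0$, or $\optint(\bx)\ge 2^{-p}$ and $\optfrac(\bx)$ is known. Let $r=\optint(\bx)/\optfrac(\bx)\in[0,1]$. Calling the FFPTAS with $t$ returns a value exactly when $1-t\le r$. Take the grid $t_i=1-(1-\epsilon)^{i}$ for $i=0,1,\dots$ and call the FFPTAS at each point until it first succeeds, at $t_i$.

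We then have $(1-\epsilon)^{i}\optfrac\le v\le \optint$. Failure at $t_{i-1}$ means $\optint<(1-\epsilon)^{i-1}\optfrac$, so $v\ge(1-\epsilon)\optint$. The number of grid points is about $\log(\optfrac/\optint)/\epsilon$, which is polynomial because $\optint\ge 2^{-p}$ and $\len(\optfrac)$ is polynomial. The case $\optint=0$ is detected by $t=1$ together with the grid floor at $2^{-p}$.

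Every $t_i$ satisfies $1/t_i\le 1/\epsilon$, so each call is polynomial. This is the main technical point: we must bound the number of calls, and the representation condition is exactly what makes that possible.

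\textbf{Part (2).} I would use the max-min two-way partition problem: maximize the smaller bin sum. It is NP-hard, since it decides Partition. Its amortized optimum is $S/2$. The upper bound is clear. For the lower bound, with $k=2$ copies each item appears twice, so we can place one copy of each item in each bin. Thus $\optfrac=S/2$, and $\optfrac$ is polynomial-time computable.

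The FFPTAS rests on the following claim: if $\optint<(1-t)S/2$, then some single item is larger than $tS/2$. To see this, suppose all items are at most $tS/2$. Greedy placement then gives a partition whose difference is at most $tS/2$, so the smaller sum is at least $(S-tS/2)/2\ge(1-t)S/2$.

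The algorithm is therefore as follows. If some item exceeds $tS/2$, there are at most $2/t$ such items. Brute-force all $2^{O(1/t)}$ assignments of the big items, which is too expensive. Instead I would run the standard FPTAS (rounding DP) with accuracy $\epsilon=t/2$ and compare its result against $(1-t)S/2$. This needs care, because the FPTAS might just miss the threshold. My fix is to use the DP rounded at granularity $tS/(4m)$, which decides whether $\optint\ge(1-t)S/2$ up to slack. I expect the exact threshold decision to be the main obstacle, and I would try to exploit the fact that the target $S/2$ is fixed, using a subset-sum DP on values scaled by $t$.

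\textbf{Part (4).} I would use a problem where $\optfrac$ is easy but the gap question is hard. An example is the max-min partition with a cardinality constraint, or the product-style problem from the paper. I would show that for a fixed constant $t$ (or for $t$ inverse-polynomial but exponentially fine relative to the numbers), deciding $\optint\ge(1-t)\optfrac$ encodes Partition. The idea is to scale the instance so that $\optfrac$ sits just above the perfect value: append a gadget making $(1-t)\optfrac$ equal to the perfect-partition value. An FFPTAS with $1/t$ polynomial would then decide Partition, while the standard rounding DP still gives an FPTAS.
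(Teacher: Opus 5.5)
Parts (1) and (3) are fine and follow the paper's route. In (3), start the grid at $i=1$, since $t_0=0$ is not a legal parameter, and simply return $0$ once about $q(\len(\bx))/\epsilon$ calls have failed. The call with $t=1$ detects nothing, because it never returns None. The genuine gaps are in Parts (2) and (4).

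\textbf{Part (2).} You never actually produce an FFPTAS. You correctly see that an FPTAS for the max-min value with $\epsilon=t/2$ only guarantees $(1-t/2)\optint$. That can fall below $(1-t)S/2$ even when $\optint\ge(1-t)S/2$. However, a rounded DP that decides the threshold ``up to slack'' does not meet the definition, which demands None \emph{exactly} when $\optint<(1-t)\optfrac$.

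The missing idea is to let the FPTAS error fall on a quantity whose optimum is pushed up to $\optfrac$ itself. Impose $z=(1-t)S/2$ as a hard lower bound on bin 1 and maximize the sum of bin 2. This is the paper's auxiliary problem ${\cal P}_{2,cc}$, i.e.\ Subset Sum with capacity $S-z$. If $\optint\ge z$, designate the larger bin of a witnessing partition as bin 2; this is a feasible solution of value at least $S/2=\optfrac$. So a Subset-Sum FPTAS with accuracy $t/2$ returns a partition whose bin 2 has sum at least $(1-t/2)S/2>z$, and whose bin 1 has sum at least $z$ by feasibility. Conversely, any returned partition with both bins at least $z$ is itself a witness. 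The gap between $\optfrac$ and $(1-t)\optfrac$ is what absorbs the approximation error. Your lemma about big items is true but does not lead to an algorithm.

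\textbf{Part (4).} This is not yet a proof: no concrete problem is fixed, membership in FracP and the FPTAS are not checked, and no reduction is written. More importantly, your gadget idea provably cannot work for two-bin max-min partition, since that problem has an FFPTAS by Part (2). You must pick a problem in which the remaining bins are not controlled by complementation.

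The paper uses max-min 4-way partition. Starting from an Equal-Cardinality Partition instance ($2m$ numbers of sum $2S$), it maps $x\mapsto 2x+4S$ and adds $2(m+1)$ items of size $4S$, so that $\optfrac=(4m+3)S$. With $t=1/(4m+3)$, a counting argument on the number of items per pair of bins shows that all four bins can reach $(4m+2)S$ iff the original instance is a yes-instance.

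Your gadget intuition can also be realized directly with four bins. Add two items of size $3S$ to a {\sc Partition} instance of total $2S$; then $\optfrac=2S$. For $t=1/2$, any split with all bins at least $S$ leaves two bins containing only original items, each of sum exactly $S$. Some such explicit construction, together with its verification, is what is missing.
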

\Cref{thm:main} shows that, at least among the class of fractionally-polynomial problems, FFPTAS is a strictly stronger approximation notion than FPTAS, but strictly weaker than {\sf P}.

\subsection{Proof of part (1)}
\begin{claim*}[Part (1)]
Let $\optint$ be a fractionally-polynomial problem.
If there exists a polynomial-time algorithm for computing  $\optint(\bx)$, then $\optint$ has an FFPTAS.
\end{claim*}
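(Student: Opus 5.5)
The plan is to build the FFPTAS directly from the two polynomial-time subroutines that are available. One is the assumed exact algorithm for $\optint$. The other is the algorithm for $\optfrac$ guaranteed by condition~1 of fractional polynomiality.

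Given input $\bx$ and $t>0$, the algorithm first computes $v:=\optint(\bx)$ using the assumed algorithm. This takes time polynomial in $\len(\bx)$. It then computes $f:=\optfrac(\bx)$ using condition~1, again in time polynomial in $\len(\bx)$; by that condition $f$ is a finite rational. It then performs the exact rational comparison $v \geq (1-t)\cdot f$. If the comparison succeeds, it returns $v$; otherwise it returns None.

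Correctness follows immediately from \Cref{def:algorithms}.
\begin{itemize}
\item If $\optint(\bx)\geq(1-t)\optfrac(\bx)$, the returned value $v=\optint(\bx)$ satisfies $(1-t)\optfrac(\bx)\le v\le\optint(\bx)$, as required.
\item If $\optint(\bx)<(1-t)\optfrac(\bx)$, then no admissible value exists, and the algorithm correctly returns None.
\end{itemize}

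The running time is polynomial in $\len(\bx)$, and therefore also polynomial in $\len(\bx)$ and $1/t$. Indeed, both subroutine outputs have bit length bounded by their (polynomial) running times, and $t$ is part of the input, so the final comparison is a multiplication and comparison of rationals of polynomial size.

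The only point needing slight care is that the comparison be done exactly on rationals rather than approximately. I would also check that the size of $t$ is counted in the input length. Otherwise it suffices to note that we may replace $t$ by $\min(t,1)$: when $t\ge 1$, the answer is always $v=\optint(\bx)\ge 0\ge(1-t)f$. There is no real obstacle; this part is essentially a direct verification.
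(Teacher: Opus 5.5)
Your proposal is correct and is essentially the paper's own proof. The paper likewise computes $\optint(\bx)$ by the assumed algorithm and $\optfrac(\bx)$ by condition~1, compares $\optint(\bx)$ with $(1-t)\optfrac(\bx)$ exactly, and returns $\optint(\bx)$ or None accordingly. One small remark: your $\min(t,1)$ trick only disposes of large $t$, whereas for $t\in(0,1)$ the cost of the comparison depends on $\len(t)$, which is not bounded by any function of $1/t$. So counting $\len(t)$ as part of the input, as you suggest, is the right convention; the paper glosses over the same point by asserting $\len(t)=O(\log(1/t))$.
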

\begin{proof}
The claim is easily proved by \Cref{alg:generalffptas-max} below. 
\begin{algorithm}[!h] \caption{\qquad FFPTAS$[\optint](\bx,t)$} \label{alg:generalffptas-max}
\begin{algorithmic}[1]
	\State Compute $\optfrac(\bx)$ and $\optint(\bx)$.
	\If {$\optint(\bx) \geq (1-t)\optfrac(\bx)$} \Return $\optint(\bx)$;
	\Else {~($\optint(\bx) < (1-t)\optfrac(\bx)$)} \Return None.
	\EndIf
\end{algorithmic} 
\end{algorithm}	

Correctness is immediate by definition: if $\optint(\bx) \geq (1-t)\optfrac(\bx)$, then the value $v := \optint(\bx)$ satisfies $(1-t)\optfrac(\bx) \leq v \leq \optint(\bx)$, as required by the definition of FFPTAS; otherwise, None is the required output.

The computations in Step 1
take time polynomial in $\len(\bx)$. This holds for $\optfrac(\bx)$ since $\optint$ is fractionally-polynomial, and holds for $\optint(\bx)$  by the claim assumption.
The comparisons in Steps 2 and 3 require an additional time that is linear in $\len(t)$, which is in $O(\log(1/t))$, so the run-time fulfills the FFPTAS requirements.
\end{proof}

\subsection{Proof of part (2)}
\label{sec:part-2}
\begin{claim*}[Part (2)]
There exists a fractionally-polynomial problem $\optint$ such that $\optint$ has an FFPTAS, but has no polynomial-time algorithm unless {\sf P} $=$ {\sf NP}. 
\end{claim*}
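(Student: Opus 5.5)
The witness problem will be two-way number partitioning, maximizing the smaller part: for $\bx=(x_1,\ldots,x_m)\in\Z_{>0}^m$ with $S:=\sum_i x_i$, set
\[
\optint(\bx):=\max\Bigl\{\min\bigl(\textstyle\sum_{i\in I}x_i,\;\sum_{i\notin I}x_i\bigr) : I\subseteq\{1,\ldots,m\}\Bigr\}.
\]
The proof has three parts: compute $\optfrac$, argue hardness of $\optint$, and build the FFPTAS.

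\textbf{Computing $\optfrac$.} First, $\optfrac(\bx)=S/2$. For the upper bound, every partition of $\rep{\bx}{k}$ has smaller part at most $kS/2$. For the lower bound, take $k=2$ and put one copy of $\bx$ in each part; this gives $\optint(\rep{\bx}{2})=S$, so the ratio reaches $S/2$. Hence $\optfrac$ is computable in linear time. Also, $\optint(\bx)\le S$ has bit length at most $\len(\bx)$. So the problem is fractionally-polynomial.

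\textbf{Hardness.} The value $\optint(\bx)$ equals $S/2$ if and only if $\bx$ admits a perfect partition, which is the NP-complete \textsc{Partition} problem. A polynomial-time algorithm for $\optint$ would therefore give {\sf P}$=${\sf NP}.

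\textbf{The FFPTAS.} This is the main step. Because $\optfrac=S/2$ is known exactly, I only need to decide whether some subset has sum in $[(1-t)S/2,\,S/2]$, and if so return such a sum. That sum is a value $v$ with $(1-t)\optfrac\le v\le\optint$.

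The difficulty is that a standard FPTAS guarantees $v\ge(1-\epsilon)\optint$. If $\optint$ is just barely above $(1-t)S/2$, such an algorithm may return something below the threshold, even though a valid answer exists. So an off-the-shelf FPTAS will not do, and I will run a trimmed subset-sum dynamic program.

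\begin{enumerate}
\item Process the items one at a time, maintaining a list of attainable subset sums capped at $S/2$.
\item After each item, trim the list with parameter $\delta=t/(2m)$ in the style of the Ibarra--Kim / CLRS subset-sum FPTAS. Every true attainable sum $s\le S/2$ then has a kept representative $r$ with $r\le s$ and $r\ge s(1-\delta)^m\ge s(1-t/2)$.
\item Keep only sums at most $S/2$. Trimming never increases a sum, so every representative also satisfies the cap.
\item The list has $O(m\log S/t)$ entries, so the running time is polynomial in $\len(\bx)$ and $1/t$.
\end{enumerate}

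To make the guarantee exact against the threshold $(1-t)S/2$, rather than losing a $(1-t/2)$ factor, I will run the procedure with $t'=t/2$ replaced by a finer parameter. Concretely, call with $\delta$ such that $(1-\delta)^m\ge 1-t/2$, and return any kept sum $\ge(1-t)S/2$.

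The correctness argument then goes as follows. If $\optint(\bx)\ge(1-t)S/2$, it would suffice for the representative $r$ of $\optint$ to satisfy $r\ge(1-t/2)\optint$, but that only gives roughly $(1-t)(1-t/2)S/2$, which falls short of $(1-t)S/2$. I will repair this by using the symmetry of the problem: for an optimal partition with smaller part $s\ge(1-t)S/2$, the larger part $S-s\le(1+t)S/2$.

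I will therefore run the trimmed DP on the target of sums at most $S/2$, but check the threshold against the exact complementary value. Computing $\min(r,S-r)$ for kept sums $r$ near $S/2$ from either side requires a two-sided trimming guarantee, namely $r\in[s(1-\delta'),s]$, applied to $s=S-s_{\mathrm{small}}$ as well.

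The cleaner fix I will try first is different. If a sum $s\in[(1-t)S/2,S/2]$ exists, the representative $r\ge s(1-t/2)$ might dip below the threshold, so instead I run the DP with trimming parameter $t/4$ and accept threshold $(1-t)S/2$ only when $\optint\ge(1-t/2)S/2$.

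Handling the boundary gap between these two thresholds, where the answer must be exact, is the main obstacle. If trimming cannot resolve it, I will fall back to replacing the problem by a variant in which the tolerance is built into the definition so that a one-sided approximation suffices.
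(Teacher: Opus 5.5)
Your witness problem, the computation $\optfrac(\bx)=S/2$, the fractional-polynomiality check and the NP-hardness argument are all correct and match the paper. The gap is the FFPTAS itself, which is the substance of the claim: it is never completed, and the fixes you list cannot complete it. With the dynamic program capped at $S/2$, the only attainable sum that matters is the smaller part $s=\optint(\bx)$. Trimming only guarantees a kept representative $r$ with $(1-\delta')s\le r\le s$. But $\optint(\bx)$ may equal, or lie arbitrarily close to, the threshold $(1-t)S/2$. So for any inverse-polynomial trimming parameter, every kept representative may fall strictly below the threshold, and this is exactly the case in which an FFPTAS is obliged to return a value. Your ``$t/4$'' variant explicitly leaves the range $(1-t)S/2\le\optint(\bx)<(1-t/2)S/2$ unhandled. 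Falling back to a different problem is not an argument.

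The missing step is the observation you wrote down but did not use: approximate the \emph{larger} part. If the smaller part is $s\ge(1-t)S/2$, then the larger part $L=S-s$ satisfies $S/2\le L\le(1+t)S/2$. So cap the subset sums at $(1+t)S/2$ instead of $S/2$. Then $L$ is an attainable sum below the cap, and with $(1-\delta)^m\ge 1-t/2$ its representative satisfies
\[
(1-t)S/2 < (1-t/2)S/2 \le (1-t/2)L \le r \le L \le (1+t)S/2 .
\]
Hence the partition into a subset of sum $r$ and its complement has both parts at least $(1-t)S/2$, and you return $\min(r,S-r)$. Conversely, any kept $r$ in $[(1-t)S/2,(1+t)S/2]$ comes from an actual partition whose smaller part $\min(r,S-r)$ lies in $[(1-t)S/2,\optint(\bx)]$. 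So returning None when no such $r$ exists is correct. The point is that the quantity being approximated is at least $S/2$, a factor $1/(1-t)$ above the threshold, and this slack absorbs the one-sided trimming loss; no two-sided guarantee is needed.

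This is exactly the paper's proof, which does not open up the DP at all. It runs an off-the-shelf FPTAS with $\epsilon=t/2$ for subset sum with capacity $S-(1-t)S/2$ (the auxiliary problem ${\cal P}_{2,cc}$). It accepts if and only if the returned sum is at least $(1-t)S/2$. Your remark that an off-the-shelf FPTAS cannot work is true only for an FPTAS of the max-min problem itself.
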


We use the max-min variant of the {\sc Partition} problem with two bins.
The input is a list of some $m$ items with positive integer sizes; the goal is to partition the list  into two bins such that the sum of item sizes in the bin with the smallest sum is as large as possible. In our notation, it can be presented as follows.
\begin{itemize}
\item The input domain $D$ is the set of all vectors $\bx = (x_1,\ldots,x_m)$ of positive integers, for all $m\geq 1$, where $x_i$  represents the size of item $i$. 
\item The optimal value is
\begin{align*}
	\optint(\bx) := \max_{T\subseteq \{1,\ldots,m\}} ~\min\Big(\sum_{i\in T} x_i,~ \sum_{i\notin T} x_i\Big),
\end{align*}
where $T$ is the set of items in bin \#1: the objective is to maximize the minimum bin sum.
\end{itemize}
We denote this problem by ${\cal P}_2$, and denote $Z := \sum_{i=1}^m x_i$ (the sum of all item sizes).

We first compute the amortized optimum of ${\cal P}_2$.
\begin{lemma}
\label{claim:p2-optfrac}
$\optfrac(\bx) = Z / 2$
for every input $\bx\in D$.
\end{lemma}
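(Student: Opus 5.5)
We prove the two inequalities $\optfrac(\bx)\le Z/2$ and $\optfrac(\bx)\ge Z/2$ separately, working directly with the supremum in \Cref{def:optfrac}.

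For the upper bound, fix $k\ge 1$. The replicated input $\rep{\bx}{k}$ has total size $kZ$. In any partition of its items into two bins, the smaller bin sum is at most half the total. Hence $\optint(\rep{\bx}{k})\le kZ/2$, so $\optint(\rep{\bx}{k})/k\le Z/2$ for every $k$. Taking the supremum gives $\optfrac(\bx)\le Z/2$.

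For the lower bound, it suffices to exhibit a single $k$ for which $\optint(\rep{\bx}{k})/k\ge Z/2$. We take $k=2$. The input $\rep{\bx}{2}$ contains every item $x_i$ exactly twice. Put one copy of each item into bin \#1 and the other copy into bin \#2. Formally, with indices $1,\ldots,2m$, choose $T=\{1,\ldots,m\}$. Both bins then have sum exactly $Z$, so $\optint(\rep{\bx}{2})\ge Z$ and $\optint(\rep{\bx}{2})/2\ge Z/2$. Therefore $\optfrac(\bx)\ge Z/2$, and combining with the upper bound yields equality.

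There is no real obstacle here; the only point needing care is that the upper bound must hold uniformly for all $k$, since $\optfrac$ is a supremum. That is immediate from the bound $\min(a,b)\le (a+b)/2$. As a byproduct, the supremum is attained at $k=2$, and indeed at every even $k$.
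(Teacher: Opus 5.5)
Your proposal is correct and follows the paper's proof essentially verbatim: the upper bound $\optint(\rep{\bx}{k})\le kZ/2$ for every $k$ because the minimum bin sum is at most the average, and the lower bound at $k=2$ by placing each copy of the list in its own bin. Your added remark that the supremum is attained at every even $k$ is also correct, since one can put $k/2$ copies of the list into each bin.
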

\begin{proof}
\emph{Upper bound}: for every $k\geq 1$, the total item size in $\rep{\bx}{k}$ is $kZ$, and in every partition into two bins, the minimum bin sum is at most the average bin sum, which is $kZ/2$. Hence $\optint(\rep{\bx}{k}) \leq kZ/2$, so $\optint(\rep{\bx}{k})/k \leq Z/2$ for every $k$.

\emph{Lower bound}: for $k=2$, the instance $\rep{\bx}{2}$ consists of two disjoint copies of the item list. Putting the first copy in bin \#1 and the second copy in bin \#2 yields a partition in which both bin sums equal $Z$, so $\optint(\rep{\bx}{2}) \geq Z$, that is, $\optint(\rep{\bx}{2})/2 \geq Z/2$.

Together, $\optfrac(\bx) = \sup_k \optint(\rep{\bx}{k})/k = Z/2$, attained at $k=2$.
\end{proof}

\begin{proof}[Proof of Part (2)]
The problem ${\cal P}_2$ is fractionally-polynomial: 
$\optint(\bx)$ is a sum of some subset of the integers in $\bx$, so it can be represented in space polynomial in $\len(\bx)$; and by \Cref{claim:p2-optfrac}, $\optfrac(\bx) = Z/2$, which can be computed in time polynomial in $\len(\bx)$. Note also that $\optfrac(\bx) = Z/2 > 0$, since the item sizes are positive integers.

It is well known that ${\cal P}_2$ is NP-hard but has an FPTAS. We now show that it also has an FFPTAS.

We will use an auxiliary combinatorial problem --- a variant of ${\cal P}_2$ with a \emph{critical coordinate}, which we denote by ${\cal P}_{2,cc}$.
(The auxiliary problem is used only as a subroutine, so we describe it directly in combinatorial terms.)
The input to ${\cal P}_{2,cc}$
is a vector $(x_1,\ldots,x_m, z)$, where the $x_i$ are the item sizes and $z$ is a rational number representing a lower bound on the sum of bin \#1. 
A feasible solution is a partition of the items into two bins such that the sum of bin \#1 is at least $z$; the objective is to maximize the sum of bin \#2.
Note that whenever $z\leq Z$, a feasible solution exists (put all items in bin \#1).

The general scheme of Woeginger \cite{woeginger2000does} can be used to construct an FPTAS for ${\cal P}_{2,cc}$
(in fact, ${\cal P}_{2,cc}$ is equivalent to the {\sc Subset Sum} problem:
the objective is to choose a subset of items for bin \#2 with a maximum sum, subject to that sum being at most $Z - z$).
We assume that FPTAS[${\cal P}_{2,cc}$] returns the partition it computes, and not only its value (Woeginger's scheme does).
\Cref{alg:fptad-max} below uses this FPTAS to design an FFPTAS for 
${\cal P}_{2}$.
\begin{algorithm}[!h] \caption{\qquad FFPTAS$[{\cal P}_2](\bx,t)$} \label{alg:fptad-max}
\begin{algorithmic}[1]
	\State 
	Compute $\optfrac(\bx) = Z/2$ and let $z := {(1-t)\cdot \optfrac(\bx)}$.
	\State 
	Run FPTAS$[{\cal P}_{2,cc}]$
	with input $(x_1,\ldots,x_m,z)$ and approximation accuracy $\epsilon = t/2$.
	\State 
	Denote by $v$ the value returned by the FPTAS (the sum of bin \#2 in the computed partition; the sum of bin \#1 in that partition is $Z - v \geq z$).
	\If {\label{line:return-yes} $v \ge z$}
	\Return $\min(v, Z - v)$.
	\Else{~($v < z$)} \Return None.
	\EndIf
\end{algorithmic} 
\end{algorithm}	


The run-time of \Cref{alg:fptad-max} is dominated by the run-time of FPTAS[${\cal P}_{2,cc}$], which is polynomial in $\len(\bx)$ and $1/\epsilon = 2/t$.
It remains to prove that \Cref{alg:fptad-max} is indeed an FFPTAS for ${\cal P}_{2}$.
\begin{itemize}
\item 
Suppose the algorithm returns a value $w = \min(v,Z-v)$ in Line \ref{line:return-yes}. The partition computed by FPTAS[${\cal P}_{2,cc}$] is a feasible solution of ${\cal P}_{2,cc}$, so the sum of its bin \#1 is $Z - v \geq z$; and the condition of Line \ref{line:return-yes} gives $v\geq z$. Hence $w = \min(v,Z-v)\geq z= (1-t)\optfrac(\bx)$. Moreover, $w$ is the minimum bin sum of an actual partition of $(x_1,\ldots,x_m)$, so $w\leq \optint(\bx)$. Hence, $w$ is a correct value for the FFPTAS.
\item 
Conversely, suppose that $\optint(\bx)\geq z$, that is, there exists a partition of $(x_1,\ldots,x_m)$ in which the sum of both bins is at least $z$. By symmetry, we can assume that bin \#1 has the smaller sum, so the sum of bin \#2 is at least half the sum of all items, that is, at least $Z/2 = \optfrac(\bx)$.

That partition is a feasible solution to ${\cal P}_{2,cc}$, as the sum of bin \#1 is at least $z$. Its objective value for that problem is the sum of bin \#2, which is at least $\optfrac(\bx)$.
Therefore, FPTAS[${\cal P}_{2,cc}$] must output a solution with objective value $v\geq (1-\epsilon)\cdot \optfrac(\bx) > (1-t)\cdot\optfrac(\bx) = z$ (the strict inequality holds since $\epsilon = t/2 < t$ and $\optfrac(\bx) > 0$).
Hence, the algorithm returns a value in Line \ref{line:return-yes}, as required.
\end{itemize}
So  \Cref{alg:fptad-max} returns a value $w$ with $(1-t)\optfrac(\bx)\leq w \leq \optint(\bx)$ if $\optint(\bx)\geq (1-t)\optfrac(\bx)$, and returns None otherwise (by the first bullet, whenever a value is returned, $\optint(\bx)\geq w \geq (1-t)\optfrac(\bx)$; so if $\optint(\bx) < (1-t)\optfrac(\bx)$, no value can be returned). Hence, it is an FFPTAS for  ${\cal P}_2$.

On the other hand, it is known that the max-min {\sc Partition} problem (even with $n=2$ bins) is {\sf NP}-hard, so, there is no polynomial time algorithm for ${\cal P}_2$ unless {\sf P} $=$ {\sf NP}. 
This concludes the proof of Part (2).
\end{proof}

\subsection{Proof of part (3)}
\begin{claim*}[Part (3)]
Let $\optint$ be a fractionally-polynomial problem.
If $\optint$ has an FFPTAS, then it has an FPTAS.
\end{claim*}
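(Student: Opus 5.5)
The plan is to use the FFPTAS as a decision oracle for threshold questions of the form ``is $\optint(\bx)\geq (1-t)\optfrac(\bx)$?'' and to binary-search over $t$, with $\optfrac(\bx)$ as a fixed, efficiently computable yardstick. Given $\bx$ and $\epsilon>0$, I would first compute $F:=\optfrac(\bx)$ in polynomial time, which is possible since $\optint$ is fractionally-polynomial. If $F=0$ then $\optint(\bx)=0$ by \eqref{eq:frac-geq-int}, and we return $0$. Otherwise we know $0\leq \optint(\bx)\leq F$, so $\optint(\bx)=(1-t^*)F$ for a unique $t^*:=1-\optint(\bx)/F\in[0,1]$.

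The first key step treats the case where $t^*$ is not too large. I would call FFPTAS$[\optint](\bx,t)$ with $t=\epsilon$. If it returns a value $v$, then $(1-\epsilon)\optint(\bx)\leq (1-\epsilon)F\leq v\leq \optint(\bx)$, so $v$ is a valid FPTAS output and we are done. If it returns None, then $\optint(\bx)<(1-\epsilon)F$, meaning the ratio $r:=\optint(\bx)/F$ lies below $1-\epsilon$. We then need to locate $r$ to within relative error $\epsilon$ using only polynomially many calls, each with $1/t$ polynomially bounded.

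The main obstacle is that $r$ may be tiny, for instance exponentially small. In that case we would need $1-t$ to be around $r$, and calls with $t$ close to $1$ are cheap since $1/t\leq 2$. So the real issue is not the cost of each call but the number of calls. I would search geometrically over thresholds $s_j:=(1-\epsilon)^j$, calling the FFPTAS with $t_j:=1-s_j$. For $j\geq 1$ this gives $1/t_j\leq 1/t_1=1/\epsilon$, so each call is polynomial. The smallest $j$ for which the call returns a value $v$ yields $s_jF\leq v\leq \optint(\bx)<s_{j-1}F=s_jF/(1-\epsilon)$, so $v\geq(1-\epsilon)\optint(\bx)$. To bound the number of calls, I would use condition 2 of fractional polynomiality: $\optint(\bx)$ is a rational number with $\len(\optint(\bx))\leq p(\len(\bx))$. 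Hence, if $\optint(\bx)>0$, it is at least $2^{-p(\len(\bx))}$, and $F$ has polynomial length, so $r\geq 2^{-q}$ for a computable polynomial $q$. Thus $j\leq q\ln 2/\epsilon+1$ suffices. If no call up to that bound returns a value, we conclude $\optint(\bx)=0$ and return $0$. Instead of linear scanning, one can also binary-search on $j$ over this polynomial range. The search is valid because the FFPTAS answers monotonically in $t$, being exact about whether $\optint(\bx)\geq(1-t)F$. Either way the total time is polynomial in $\len(\bx)$ and $1/\epsilon$.

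A small technical point I would handle is writing the thresholds $t_j=1-(1-\epsilon)^j$ in polynomially many bits. To do this, I would round $\epsilon$ down to $2^{-\lceil\log_2(1/\epsilon)\rceil}$ and, if needed, replace $s_j$ by rounded rationals while keeping the ratio between consecutive thresholds at least $1-\epsilon$. This step is routine.
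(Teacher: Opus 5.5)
Your proposal is correct and matches the paper's proof essentially step for step. The paper also calls the FFPTAS with $t_k = 1-(1-\epsilon)^k$ for $k$ up to about $q(\len(\bx))/\epsilon$, and returns the first value found, which is within a factor $1-\epsilon$ of $\optint(\bx)$ because the previous call returned None; it likewise uses the polynomial bit-lengths of $\optint(\bx)$ and $\optfrac(\bx)$ to justify returning $0$ when every call fails. Your optional binary search and your rounding of $\epsilon$ to a power of two, which keeps the thresholds' bit-lengths polynomial, are harmless refinements that the paper does not spell out.
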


\begin{proof}
We prove the claim by constructing an FPTAS for $\optint$.

We first use the given FFPTAS$[\optint]$ with parameter $t=\epsilon$, to look for a value $v$ with $(1-\epsilon)\cdot \optfrac(\bx)\leq v\leq \optint(\bx)$. If the FFPTAS finds such a value, then we return it as the result of the FPTAS: by \eqref{eq:frac-geq-int}, $v\geq (1-\epsilon)\optfrac(\bx) \geq (1-\epsilon)\optint(\bx)$, and $v\leq \optint(\bx)$, as required.

If the FFPTAS returns None, we know that $\optint(\bx) < (1-\epsilon)\cdot \optfrac(\bx)$. 
We use FFPTAS$[\optint]$ again 
with $t=1-(1-\epsilon)^2$, to look for a value $v$ with $(1-\epsilon)^2\cdot \optfrac(\bx)\leq v\leq \optint(\bx)$. 
If the FFPTAS finds such a value, then we return it, as it is at least $(1-\epsilon)^2 \optfrac(\bx)$, which is larger than $(1-\epsilon)\optint(\bx)$ by the inequality starting this paragraph.

We proceed similarly, using FFPTAS$[\optint]$ to look for a value  $v$ with $(1-\epsilon)^k \cdot \optfrac(\bx)\leq v\leq \optint(\bx)$, for increasing values of $k$. If such a value is found, then we return it: for $k\geq 2$, the FFPTAS returned None at iteration $k-1$, so $\optint(\bx) < (1-\epsilon)^{k-1}\optfrac(\bx)$, and hence $v\geq (1-\epsilon)^k\optfrac(\bx) > (1-\epsilon)\optint(\bx)$.
If a value is not found, we know that
$\optint(\bx)/\optfrac(\bx) < (1-\epsilon)^k$.

We now show that, after a polynomial number of iterations, if a solution has not been found yet, we can stop and return $0$.

Because $\optint$ is fractionally-polynomial, both $\len(\optint(\bx))$ and $\len(\optfrac(\bx))$ are polynomial in $\len(\bx)$. Hence, 
$\len(\optint(\bx) / \optfrac(\bx))$ is 
at most $q(\len(\bx))$, where $q$ is some polynomial function. 
In particular, the denominator of the ratio $\optint(\bx) / \optfrac(\bx)$ is at most $2^{q(\len(\bx))}$.
Hence, if $\optint(\bx) / \optfrac(\bx)  < 1/2^{q(\len(\bx))}$, then necessarily $\optint(\bx)=0$, and the FPTAS can return $0$ (which satisfies $(1-\epsilon)\optint(\bx)\leq 0\leq \optint(\bx)$). 

We now solve the inequality to find $k$:
\begin{align*}
(1-\epsilon)^k &< 1/2^{q(\len(\bx))}
\\
[1/(1-\epsilon)]^k &> 2^{q(\len(\bx))}
\\
k &> q(\len(\bx)) / \log_2[1/(1-\epsilon)]
\end{align*}
As $\log[1/(1-\epsilon)] = \sum_{i=1}^\infty \frac{\epsilon^i}{i} > \epsilon$, it is sufficient to run the algorithm for all $k$ between $1$ and $q(\len(\bx))/\epsilon$; if no value is found, the FPTAS can return $0$. 
The pseudo-code is shown in \Cref{alg:generalfptas-max} below. 

\begin{algorithm}[!h] \caption{\qquad FPTAS$[\optint](\bx,\epsilon)$} \label{alg:generalfptas-max}
\begin{algorithmic}[1]
	\State\label{step:compute-max} Compute $\optfrac(\bx)$. {\sf If} $\optfrac(\bx)=0$ {\sf then} \Return $0$. \Comment{by \eqref{eq:frac-geq-int}, $\optint(\bx)=0$ in this case}
	\State Let $q$ be the polynomial function such that the denominator of $\optint(\bx)/\optfrac(\bx)$ is at most $2^{q(\len(\bx))}$ for all $\bx\in D$.
	\For{$k := 1$ to $q(\len(\bx))/\epsilon$}
	\State \label{step:t} 
	$t \longleftarrow 1 - (1 - \epsilon)^k$.
	\State \label{step:yes-max} {\sf If} FFPTAS$[\optint](\bx,t)$ returns a value $v$, {\sf then} \Return $v$.
	\EndFor
	\State 
	\Return $0$.
\end{algorithmic} 
\end{algorithm}	

The number of calls to FFPTAS$[\optint]$ is polynomial in $\len(\bx)$ and $1/\epsilon$. In each call, the parameter $t$ is at least $\epsilon$, and the runtime of FFPTAS$[\optint]$ is polynomial in $\len(\bx)$ and $1/t\leq 1/\epsilon$; hence the total run-time of the algorithm is polynomial in $\len(\bx)$ and $1/\epsilon$, as required by the definition of FPTAS.
\end{proof}

\subsection{Proof of part (4)}
\begin{claim*}[Part (4)]
There exists a fractionally-polynomial problem $\optint$ such that $\optint$ has an FPTAS, but has no FFPTAS unless {\sf P} $=$ {\sf NP}. 
\end{claim*}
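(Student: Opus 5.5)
We take the max-min {\sc Partition} problem with three bins, ${\cal P}_3$: the input is a vector $\bx=(x_1,\ldots,x_m)$ of positive integers, and $\optint(\bx)$ is the largest possible minimum bin sum over all partitions of the items into three bins. The plan has three steps: compute $\optfrac(\bx)$, show that ${\cal P}_3$ is fractionally-polynomial with an FPTAS, and reduce an NP-hard decision problem to the FFPTAS.

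\textbf{Step 1: the amortized optimum.} We show that $\optfrac(\bx)=Z/3$, arguing exactly as in \Cref{claim:p2-optfrac}.
\begin{itemize}
\item For the upper bound, in $\rep{\bx}{k}$ the minimum of the three bin sums is at most their average $kZ/3$, so $\optint(\rep{\bx}{k})/k\leq Z/3$ for every $k$.
\item For the lower bound, in $\rep{\bx}{3}$ we put one whole copy of $\bx$ in each bin. Every bin then has sum $Z$, so $\optint(\rep{\bx}{3})/3\geq Z/3$.
\end{itemize}

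\textbf{Step 2: ${\cal P}_3$ is fractionally-polynomial and has an FPTAS.} The value $\optfrac(\bx)=Z/3$ is computable in polynomial time. The value $\optint(\bx)$ is a sum of a subset of the input integers, so its representation has polynomial length. The existence of an FPTAS for max-min partition with a constant number of bins is standard; it follows from Woeginger's framework \cite{woeginger2000does}, via the usual dynamic program over bin-sum vectors with rounding.

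\textbf{Step 3: hardness of an FFPTAS.} The natural idea is to run the FFPTAS with a small $t$ and read off whether a perfect partition exists. This fails directly, because for $t$ below $1/\mathrm{poly}$ the run-time is exponential. The plan is to exploit the gap at constant $t$ instead.

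We reduce from {\sc Partition}, which is NP-hard. Given positive integers $a_1,\ldots,a_n$ with even sum $2S$, we build the ${\cal P}_3$ instance
\[
\bx := (a_1,\ldots,a_n,\,S).
\]
Its total is $Z=3S$, so $\optfrac(\bx)=S$ by Step 1.
\begin{itemize}
\item If the $a_i$ can be split into two halves of sum $S$ each, then these two halves and the singleton $\{S\}$ form three bins of sum $S$. Hence $\optint(\bx)=S$.
\item If no such split exists, we show that $\optint(\bx)\leq S/2$. Consider the bin containing the item $S$, and suppose some other item also lies in it. Then that bin has sum greater than $S$, so the remaining two bins share less than $2S$ and one of them has sum less than $S$. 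This alone gives only a weak bound.
\end{itemize}

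We therefore modify the instance. We keep the item $S$ but make it large enough to force a constant-factor gap. Alternatively, we use a classic strongly-controlled gadget: we add two items of size $L=S$, and use the $a_i$'s split into two bins together with these items. The right gadget is the main obstacle.

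A cleaner route is to use the intrinsic structure of the problem. Every FFPTAS answer with $t$ satisfies $v\leq\optint$, so with $t$ we decide whether $\optint\geq(1-t)Z/3$. We choose the instance so that $\optint$ is either exactly $Z/3$ or at most $(1-c)Z/3$ for a constant $c$. To do this, we scale the {\sc Partition} items into $\{a_i\}$ plus two huge items $B,B$ with $B\gg S$, and aim for the three-bin optimum to behave like $\min(B+\ldots)$. Verifying a constant-ratio gap here is the delicate part; if it fails, the fallback is a reduction from a strongly NP-hard problem such as {\sc 3-Partition}. In that fallback, the FFPTAS with $t=1/(2Z)$ runs in pseudo-polynomial time, and pseudo-polynomial time suffices to contradict strong NP-hardness unless {\sf P} $=$ {\sf NP}.

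We choose this {\sc 3-Partition}/strong-hardness route as the primary plan. We take the max-min partition problem with a variable number of bins $n$ encoded in the input, so that $\optfrac=Z/n$ is still polynomial to compute. An FPTAS must then still exist, which forces a restriction such as a constant number $n$ of bins. For constant $n$ the problem is only weakly NP-hard, so to get strong hardness we will need the $n$ in the instance to vary while the FPTAS survives. Reconciling these two requirements is the step we expect to be hardest.
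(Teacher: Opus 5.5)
\textbf{There is a genuine gap: Step 3 never produces a reduction.} Your Steps 1 and 2 are fine. The paper uses four bins and remarks that a three-bin version also works. But both routes you sketch for Step 3 are dead ends, and both contradict your own Step 2.

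\emph{The constant-gap route.} Suppose yes-instances had $\optint(\bx)=Z/3$ and no-instances had $\optint(\bx)\le(1-c)Z/3$. Then the FPTAS with $\epsilon=c/2$ would decide {\sc Partition} in polynomial time. The same holds for any relative gap of size $1/\mathrm{poly}$.

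\emph{The strong-NP-hardness fallback.} Suppose a problem has an FPTAS and an integer objective bounded by $Z$. Running the FPTAS with $\epsilon<1/Z$ solves it exactly in pseudo-polynomial time. So it cannot be strongly NP-hard unless {\sf P} $=$ {\sf NP}. The tension you flag at the end is therefore not merely hard but irreconcilable. Any argument that gets its power from a gap, or from taking $t$ tiny, defeats the FPTAS just as well, so it cannot separate the two notions.

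\textbf{The missing idea.} An FFPTAS must answer an \emph{exact} threshold question: whether $\optint(\bx)\ge(1-t)\optfrac(\bx)$. In a no-instance, $\optint(\bx)$ may lie just below that threshold, where no FPTAS can see the difference. So you should fix $t=1/\mathrm{poly}$ and design the instance so that the intended yes-solution is itself \emph{imperfect} and attains exactly $(1-t)\optfrac$. A counting argument should then show that nothing else attains it.

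The paper does this as follows.
\begin{enumerate}
\item Start from an {\sc Equal-Cardinality Partition} instance of $2m$ integers with sum $2S$. Replace each $x$ by $2x+4S$, add $2(m+1)$ filler items of size $4S$, and use four bins.
\item Then $\optfrac=(4m+3)S$, and $t=1/(4m+3)$ gives threshold $(4m+2)S$.
\item In the intended solution, the two filler bins get $(4m+4)S$ each and the two data bins get exactly $(4m+2)S$ each.
\item Conversely, if every bin reaches $(4m+2)S$, then any two bins hold at most $2(4m+4)S$. This is exactly the sum of the $2(m+1)$ smallest items, so any two bins contain at most $2(m+1)$ items.
\item A counting argument shows that some pair of bins contains exactly $2(m+1)$ items, and these must be the fillers.
\item The remaining two bins must therefore split the scaled items into two sets with sum exactly $(4m+2)S$ each. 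Since any $m+1$ scaled items exceed that sum, each set has exactly $m$ items.
\end{enumerate}

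Because all items have size $\Theta(S)$, a slack of $S$ already pins down the structure, even though it is only a $1/(4m+3)$ fraction of $\optfrac$. Your instance $(a_1,\ldots,a_n,S)$ lacked exactly this property. There the slack had to be below one unit, that is $t<1/S$, which is why your direct attempt forced an exponentially small $t$.
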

\begin{proof}
We use the max-min variant of 4-way {\sc Partition}, which we denote by ${\cal P}_4$.%
\footnote{
We have a proof of the same claim using 
3-way {\sc Partition}. We use
4-way {\sc Partition} here because the proof is slightly simpler.
}
${\cal P}_4$ is defined similarly to ${\cal P}_2$ from Part (2), except that the items are partitioned into four bins: the set of valid inputs $D$ is the same, and the optimal value is
\begin{align*}
\optint(\bx) := \max_{b:\{1,\ldots,m\}\to\{1,2,3,4\}} ~\min_{j\in\{1,2,3,4\}} \sum_{i:\, b(i)=j} x_i,
\end{align*}
where $b$ assigns each item to a bin: the objective is to maximize the minimum bin sum. As before, we denote $Z := \sum_{i=1}^m x_i$.

We first compute the amortized optimum of ${\cal P}_4$, analogously to \Cref{claim:p2-optfrac}:
\begin{itemize}
\item[--] \emph{Upper bound}: for every $k\geq 1$, the total item size in $\rep{\bx}{k}$ is $kZ$, and in every partition into four bins, the minimum bin sum is at most the average bin sum, which is $kZ/4$. Hence $\optint(\rep{\bx}{k})/k \leq Z/4$ for every $k$.
\item[--] \emph{Lower bound}: for $k=4$, the instance $\rep{\bx}{4}$ consists of four disjoint copies of the item list. Putting copy $j$ in bin $j$ (for $j\in\{1,2,3,4\}$) yields a partition in which every bin sum equals $Z$, so $\optint(\rep{\bx}{4})/4 \geq Z/4$.
\end{itemize}
Hence $\optfrac(\bx) = Z/4$, and the supremum is attained at $k=4$.
As $\optfrac(\bx) = Z/4$ can be computed in polynomial time, and $\optint(\bx)$ is a sum of a subset of the integers in $\bx$, the problem is fractionally-polynomial.
~
Woeginger \cite{woeginger2000does} gave an FPTAS for ${\cal P}_4$.%
\footnote{More generally, he gave an FPTAS for $k$-way partition for any fixed $k$. Woeginger's FPTAS returns the value of an actual partition, which is at most $\optint(\bx)$, so it satisfies our definition of FPTAS.}
We now prove that 4-way partition has no FFPTAS unless {\sf P} $=$ {\sf NP}. 

The proof is by reduction from the 
{\sc Equal-Cardinality Partition} problem:
given a list with $2 m$ integers and sum $2 S$, decide if they can be partitioned into two subsets with cardinality $m$ and sum $S$.
It is  proved to be {\sf NP}-hard in \cite{gareycomputers}. 

Given an instance ${\cal X}_1$ of {\sc Equal-Cardinality Partition}, we can assume w.l.o.g. that all integers in ${\cal X}_1$ are at most $S$, since otherwise the answer is necessarily ``no''.

We construct an instance ${\cal X}_2$ of the {\sc Equal-Cardinality Partition} problem by replacing each integer $\bx$ in ${\cal X}_1$ by $2 \bx + 4 S$. 
So ${\cal X}_2$ contains $2 m$ integers between $4 S$ and $6 S$.
Their sum is $2\cdot (2 S) + 2 m\cdot (4 S) = 2(4m+2)S$.
Clearly, ${\cal X}_1$ has an equal-sum equal-cardinality partition (with bin sums $S$) if and only if ${\cal X}_2$ has an equal-sum equal-cardinality partition (with bin sums $(4 m + 2) S$).

We construct an instance ${\cal X}_4$ for ${\cal P}_{4}$ that contains $2 (2m+1)$ items: The $2 m$ items in ${\cal X}_2$, and $2(m+1)$ additional \emph{small items} with size $4 S$.
So the sum of all item sizes in ${\cal X}_4$ is:
\begin{align*}
2 (4m + 2) \cdot S  + 2 (m+1) \cdot 4 S
=
4 (4m + 3) \cdot S.
\end{align*}
Hence, $\optfrac({\cal X}_4) = (4m+3)S$ (a quarter of the total item size, by the computation of $\optfrac$ above).

\begin{claim}
\label{claim:part-4-aux}
${\cal X}_4$ can be partitioned into four bins with sum at least $(4m+2)S$,
if-and-only-if
${\cal X}_2$ can be partitioned into two bins with cardinality $m$ and sum exactly $(4m+2)S$.
\end{claim}
\begin{proof}
$\impliedby$:
Suppose ${\cal X}_2$ can be partitioned into two bins of sum $(4m+2)S$. 
The $2 (m+1)$ small items can be divided into two additional bins of $m+1$ items each, with sum 
\begin{align}
	\label{eq:(m-1)nS-max}
	(m+1) \cdot 4 S = (4m+4)S > (4m+2)S,
\end{align}
so in the resulting partition of ${\cal X}_4$, the sum of each of the four bins is at least $(4m+2)S$.

$\implies$:
Suppose ${\cal X}_4$ can be partitioned into four bins with sum at least $(4m+2)S$. Let us analyze the structure of this partition.
\begin{itemize}
\item [--] Since the sum of each bin is at least 
$(4m+2)S$, and the average sum of a bin is $(4m+3)S$, the sum of every two bins is at most 
$4(4m+3)S - 2(4m+2)S = 2(4m+4)S$.

\item [--] The sum of the smallest $2(m+1)$ items in ${\cal X}_4$ is exactly $2(4m+4)S$. Hence, every two bins must contain together at most $2(m+1)$ items.
	
\item [--] Let $c_1,c_2,c_3,c_4$ be the number of items in the four bins, so that $c_1+c_2+c_3+c_4 = 2m + 2(m+1) = 4m+2$. By the previous item, $c_i + c_j \le 2(m+1)$ for every pair of bins $i\neq j$. We claim that some pair attains this bound with equality.
Suppose not, so that $c_i + c_j \le 2m+1$ for every pair. 
Indeed, there are three ways to split the four bins into two disjoint pairs; in each such split the two pair-sums add up to $4m+2$, while each pair-sum is at most $2m+1$, so each pair-sum equals exactly $2m+1$. Applying this to the splits $\{1,2\}\mid\{3,4\}$, $\{1,3\}\mid\{2,4\}$, and $\{1,4\}\mid\{2,3\}$ gives $c_1+c_2 = c_1+c_3 = c_1+c_4 = 2m+1$, whence $c_2 = c_3 = c_4 =: c$. Then $c_1 + 3c = 4m+2$ and $c_1 + c = 2m+1$, so $2c = 2m+1$ which is impossible for an integer $c$. Hence some two bins contain together {exactly} $2(m+1)$ items; w.l.o.g.\ assume that these are bins \#1 and \#2.
	
\item [--]
The sum of bins \#1 and \#2 is at most $2(4m+4)S$, they contain exactly $2(m+1)$ items, and each item size is at least $4 S$. Therefore, bins \#1 and \#2 must contain only items of size exactly $4 S$, and we can assume w.l.o.g. that these are the $2(m+1)$ \emph{small} items.
	\item [--] The other two bins, bin \#3 and bin \#4, contain together the $2m$ items of ${\cal X}_2$, their sum is $2(4m+2)S$, and the sum of each bin is at least $(4m+2)S$, so the sum of each bin must in fact be exactly $(4m+2)S$. 
	But the sum of every $m+1$ items is at least $(m+1)4S > (4m+2)S$. Hence, bins \#3 and \#4 must contain exactly $m$ items each.
\end{itemize}   
This concludes the proof of the auxiliary claim.
\end{proof}

We now return to the proof of the theorem.
We apply FFPTAS$[{\cal P}_4]$ on ${\cal X}_4$ with parameter $t := 1/(4m+3)$.
Note that
$$
(1-t)\cdot \optfrac({\cal X}_4) = 
(1-t)\cdot (4m+3)S = 
(4m+2)S.
$$
By definition, FFPTAS$[{\cal P}_4]$ returns a non-None value if and only if $\optint({\cal X}_4)\geq (4m+2)S$, that is, if and only if there exists a partition of ${\cal X}_4$ into four bins with sum at least $(4m+2)S$.
By the auxiliary claim, 
this holds if and only if 
the {\sc Equal-Cardinality Partition} instance $({\cal X}_2)$ is a ``yes'' instance. 
As $1/t = 4m+3$ is polynomial in the input size,
FFPTAS$[{\cal P}_4]$ 
could be used to solve 
{\sc Equal-Cardinality Partition} in polynomial time, which is impossible unless {\sf P} $=$ {\sf NP}.
\end{proof}

\section{Implications and Discussion}
\label{sec:future}
\begin{definition}
{\sf FFPTAS} is the class of problems that have an FFPTAS
\end{definition}

The most interesting implication of \Cref{thm:main} is that {\sf FFPTAS} is a new complexity class, as
\begin{align*}
({\sf P}\cap {\sf FracP})  ~~\subsetneq~~  ({\sf FFPTAS}\cap {\sf FracP})  ~~\subsetneq~~  ({\sf FPTAS}\cap {\sf FracP}).
\end{align*}

This implication hinges on the fact that an optimization problem is defined by the function $\optint$, and the 
relaxation function $\optfrac$ is defined directly from $\optint$ (\Cref{def:optfrac}).
An alternative we considered is to define an optimization problem as a pair $(\fint,g)$, and define 
\begin{align*}
\optint(\bx) &:= \max_{\by} g(\bx,\by) \text{~subject to~} \by\in \fint(\bx),
\\	
\optfrac(\bx) &:= \max_{\by} g(\bx,\by) \text{~subject to~} \by\in \conv(\fint(\bx)),
\end{align*}
where $\conv$ denotes the convex hull of a set of vectors.
\Cref{thm:main} can be proved for this definition too; however, it does not define a new complexity class, as the same optimization problem $\optint$ can have different representations as a pair $(\fint,g)$, which lead to different $\optfrac$.%
\footnote{
A famous example is bin-packing: its naive representation has a decision variable for each (item,bin) pair, but it has a more sophisticated representation known as the \emph{configuration program}, whose decision variables correspond to bin configurations. The convex hulls of the integer solution sets of these two representations are substantially different, and the relaxation of the configuration representation is much more useful for approximation algorithms, as shown by \cite{fernandez1981bin} and \cite{karmarkar1982efficient} and the many follow-up works on that problem.
} 
It is possible that an FFPTAS exists for some representations but not for others; hence, the complexity class ${\sf FFPTAS}$ would not be well-defined with this definition.

The restriction to fractionally-polynomial problems in \Cref{thm:main} is necessary: as shown in \Cref{sec:rel-harder-than-opt}, there is a problem that has a polynomial-time algorithm but no FFPTAS unless ${\sf P}={\sf NP}$, so ${\sf P}\not\subseteq {\sf FFPTAS}$ in general.
We can still show the existence of a class between {\sf P} and {\sf FPTAS} by defining
\begin{align*}
{\sf C} := {\sf P} ~\cup~ ({\sf FFPTAS}\cap \text{FracP}).
\end{align*}
Then,
\begin{align}
\label{eq:containment-c}
{\sf P} ~~\subsetneq~~  {\sf C}  ~~\subsetneq~~  {\sf FPTAS}:
\end{align}
we have ${\sf P} ~~\subseteq~~  {\sf C}$ by construction;
${\sf C}  ~~\subseteq~~  {\sf FPTAS}$
since ${\sf P}\subseteq {\sf FPTAS}$ and by \Cref{thm:main} part (3); 
the problem ${\cal P}_2$ of Part (2) witnesses the first strict containment; and the problem ${\cal P}_4$ of Part (4), which is in {\sf FPTAS} but neither in {\sf P} nor in {\sf FFPTAS}, witnesses the second.
One may ask whether there exists a class ${\sf C}$ satisfying \eqref{eq:containment-c} with 
a smoother definition.

Finally, to illustrate the generality of our definition and to inspire future work, we present in \Cref{sec:rsp} an NP-hard problem on graphs, which is fractionally-polynomial and has an FPTAS. We show that a special case of it, which is still NP-hard, has an FFPTAS, thus showing a second member in ${\sf FracP} \cap ({\sf FFPTAS} \setminus {\sf P})$. We leave as an open question whether it has an FFPTAS in general.

\section*{Acknowledgments}
Some of the technical results in the paper were developed in collaboration with claude.ai Opus 4.8 and Fable 5 models. 
Full details are available in the shared conversation link
\ifanon
, which is suppressed for anonymity.
\else
.\footnote{\url{https://claude.ai/share/22ea06cd-1dcd-4665-867e-f4ffc58fde33}}
We are grateful to 
Neal Young, Andras Farago,%
\footnote{See 
\url{https://cstheory.stackexchange.com/q/52830}
for the full discussion.
}
Edward A. Hirsch and Yossi Azar for insightful comments and discussions. 
\fi

\newpage
\bibliography{references}

\begin{thebibliography}{10}

\bibitem{bismuth2024partitioning}
Samuel Bismuth, Vladislav Makarov, Erel Segal-Halevi, and Dana Shapira.
\newblock Partitioning problems with splittings and interval targets.
\newblock In {\em 35th International Symposium on Algorithms and Computation
  (ISAAC 2024)}, pages 12--1. Schloss Dagstuhl--Leibniz-Zentrum f{\"u}r
  Informatik, 2024.

\bibitem{fekete1923verteilung}
Michael Fekete.
\newblock {\"U}ber die {V}erteilung der {W}urzeln bei gewissen algebraischen
  {G}leichungen mit ganzzahligen {K}oeffizienten.
\newblock {\em Mathematische Zeitschrift}, 17:228--249, 1923.

\bibitem{fernandez1981bin}
W~Fernandez~de La~Vega and George~S. Lueker.
\newblock Bin packing can be solved within 1+ $\varepsilon$ in linear time.
\newblock {\em Combinatorica}, 1(4):349--355, 1981.

\bibitem{gareycomputers}
M.~R. Garey and David~S. Johnson.
\newblock {\em Computers and Intractability: {A} Guide to the Theory of
  NP-Completeness}.
\newblock W. H. Freeman, 1979.

\bibitem{handler1980dual}
Gabriel~Y. Handler and Israel Zang.
\newblock A dual algorithm for the constrained shortest path problem.
\newblock {\em Networks}, 10(4):293--309, 1980.

\bibitem{hassin1992approximation}
Refael Hassin.
\newblock Approximation schemes for the restricted shortest path problem.
\newblock {\em Mathematics of Operations Research}, 17(1):36--42, 1992.

\bibitem{hilton1973colour}
A.~J.~W. Hilton, R.~Rado, and S.~H. Scott.
\newblock A ({$<$}5)-colour theorem for planar graphs.
\newblock {\em Bulletin of the London Mathematical Society}, 5(3):302--306,
  1973.

\bibitem{karmarkar1982efficient}
Narendra Karmarkar and Richard~M Karp.
\newblock An efficient approximation scheme for the one-dimensional bin-packing
  problem.
\newblock In {\em 23rd Annual Symposium on Foundations of Computer Science
  (sfcs 1982)}, pages 312--320. IEEE, 1982.

\bibitem{lorenz2001simple}
Dean~H. Lorenz and Danny Raz.
\newblock A simple efficient approximation scheme for the restricted shortest
  path problem.
\newblock {\em Operations Research Letters}, 28(5):213--219, 2001.

\bibitem{matousek2007understanding}
Jiri Matousek and Bernd G{\"a}rtner.
\newblock {\em Understanding and using linear programming}.
\newblock Springer Science \& Business Media, 2007.

\bibitem{mehlhorn2000resource}
Kurt Mehlhorn and Mark Ziegelmann.
\newblock Resource constrained shortest paths.
\newblock In {\em Proceedings of the 8th Annual European Symposium on
  Algorithms (ESA)}, volume 1879 of {\em LNCS}, pages 326--337. Springer, 2000.

\bibitem{motzkin1965maxima}
Theodore~S Motzkin and Ernst~G Straus.
\newblock Maxima for graphs and a new proof of a theorem of {Tur{\'a}n}.
\newblock {\em Canadian Journal of Mathematics}, 17:533--540, 1965.

\bibitem{scheinerman1997fractional}
Edward~R. Scheinerman and Daniel~H. Ullman.
\newblock {\em Fractional Graph Theory: A Rational Approach to the Theory of
  Graphs}.
\newblock Wiley, 1997.
\newblock Reprinted by Dover Publications, 2011.

\bibitem{schrijver1998theory}
Alexander Schrijver.
\newblock {\em Theory of linear and integer programming}.
\newblock John Wiley \& Sons, 1998.

\bibitem{schuurman2001approximation}
Petra Schuurman and Gerhard~J Woeginger.
\newblock Approximation schemes --- a tutorial, 2001.

\bibitem{shannon1956zero}
Claude~E. Shannon.
\newblock The zero error capacity of a noisy channel.
\newblock {\em IRE Transactions on Information Theory}, 2(3):8--19, 1956.

\bibitem{woeginger2000does}
Gerhard~J. Woeginger.
\newblock When does a dynamic programming formulation guarantee the existence
  of a fully polynomial time approximation scheme ({FPTAS})?
\newblock {\em {INFORMS} J. Comput.}, 12(1):57--74, 2000.

\end{thebibliography}

\newpage
\appendix
\section{Not Every Polynomial-time Solvable Problem is Fractionally-polynomial}
\label[appendix]{sec:rel-harder-than-opt}

For many combinatorial optimization problems, computing the optimal value $\optint(\bx)$ is NP-hard whereas computing the amortized optimum $\optfrac(\bx)$ can be done in polynomial time; the partition problems of \Cref{thm:main} are examples.
This might give the impression that $\optfrac$ is always ``at least as easy'' as $\optint$. But this impression is false: the amortized optimum aggregates the values of $\optint$ over \emph{all} replications of the input, and this aggregation can encode a hard search problem, even when each individual value of $\optint$ is easy to compute. 

An example of this kind, for the classical relaxation in which integrality constraints are dropped, is the 
Motzkin--Straus quadratic program of a graph $G$ \cite{motzkin1965maxima}, whose integer optimum is trivial to compute (it is always $0$), whereas its fractional optimum is a function of the clique number of $G$, which is NP-hard to compute.

We now give an explicit construction, for the amortized relaxation, based on the satisfiability problem.

\begin{proposition}
\label[proposition]{prop:rel-harder-than-opt}
There is a maximization problem $\optint$ that can be computed in polynomial time, whereas computing $\optfrac$ is NP-hard.
\end{proposition}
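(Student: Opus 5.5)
Our plan is to build a problem in which a single copy of the input can never exhibit a satisfying assignment, while a replicated input can. The input domain is $D=\bigcup_{m\ge1}X^m$, where each symbol of $X$ encodes a CNF formula $\varphi$, and there is also a distinguished ``witness'' symbol type encoding a truth assignment. In that case we take $X$ to be a set of strings, or positive integers under a fixed polynomial-time encoding.

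We define $\optint(\bx)$ as follows. If $\bx$ contains, besides copies of one formula $\varphi$, an encoded assignment $a$ with $a\models\varphi$, then $\optint(\bx)=1$; otherwise $\optint(\bx)=0$. Checking this is polynomial: parse the symbols and evaluate $\varphi$ under each listed assignment. The catch is that replication $\rep{\bx}{k}$ only copies symbols already present in $\bx$, so this definition gives $\optfrac=\optint$ and is useless. The replication count itself must supply the search space.

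The repaired definition lets the multiplicity encode the assignment. For an input $\bx$ consisting of $k$ copies of the same formula symbol $\varphi$ with $n$ variables, read $k-1$ in binary and take its low $n$ bits as an assignment $a_k$. Set $\optint(\bx)=k$ if $a_k\models\varphi$, and $\optint(\bx)=0$ otherwise. For every other input (not a constant vector of one formula symbol) set $\optint=0$. This is computable in time polynomial in $\len(\bx)$, since $k\le \len(\bx)$ and evaluation is linear. For the single-symbol input $\bx=(\varphi)$ we have $\rep{\bx}{k}$ equal to $k$ copies of $\varphi$, so $\optint(\rep{\bx}{k})/k\in\{0,1\}$, and it equals $1$ exactly when $a_k\models\varphi$. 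As $k$ ranges over $1,\ldots,2^n$, $a_k$ ranges over all assignments. Hence $\optfrac((\varphi))=1$ if $\varphi$ is satisfiable and $0$ otherwise, and the supremum is a finite maximum. Computing $\optfrac$ therefore decides SAT, and the problem is NP-hard. Values are nonnegative rationals, as required by Definition~\ref{def:problem}.

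The main obstacle is making sure the polynomial-time claim is honest. The run time must be polynomial in $\len(\bx)$ even though $k$ may be exponential in $\len(\varphi)$. This holds because $\len(\bx)\ge k$: each copy costs at least one bit. So computing $k$, taking its low $n$ bits, and evaluating $\varphi$ all take time polynomial in $\len(\bx)$.

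The remaining check is that $\optfrac$ is finite on every input, as the paper requires. For a general input $\bx$, if $\bx$ is not a constant vector then neither is any replication, so $\optfrac=0$. If $\bx$ is $j$ copies of $\varphi$, then $\optint(\rep{\bx}{k})/k\le jk/k=j$. So $\optfrac$ is finite everywhere.
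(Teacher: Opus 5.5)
Your proof is correct. Its core mechanism is the same as the paper's: the replication count is read as a candidate assignment, and the objective is the replication count times the indicator that this assignment satisfies the formula. Hence $\optint(\rep{\bx}{k})/k\in\{0,1\}$, and the supremum over $k$ is an OR over all assignments. Your checks of polynomial-time computability (via $k\le\len(\bx)$) and of finiteness of $\optfrac$ are also sound.

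The difference is in the input encoding. You let a single entry of the vector be the code of an entire formula. Then the only relevant inputs are constant vectors, the exponent is just their length, and non-constant vectors stay non-constant under replication, which gives $\optfrac=0$ on them immediately. The paper instead fixes $D$ to be binary strings, where a formula necessarily spans many entries. It therefore has to decompose an arbitrary input as $\rep{\br}{d}$ with $\br$ primitive, use the fact that the primitive root is invariant under replication, and make valid formula encodings primitive. Your route is more elementary. The paper's route buys the fact that the phenomenon already occurs over a fixed two-letter alphabet.

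There is one minor divergence, and it does not affect this statement. The paper uses the offset $d-2$ rather than your $k-1$ so that $\optint$ vanishes on every primitive input; it exploits this in the next proposition (no FFPTAS, via $t=1/2$). With your offset, $\optint((\varphi))=1$ exactly when the all-zero assignment satisfies $\varphi$. That is harmless here, and it could be handled in the follow-up by testing that assignment separately or by shifting the offset. In a write-up, you should also drop the opening description of witness symbols, since it belongs only to your discarded first attempt.
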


\begin{proof}
We use two ingredients.

First, a standard fact from combinatorics on words: every finite string $\bx$ has a unique \emph{primitive root} --- a string $\br$ that is not itself a replication of a shorter string --- and a unique exponent $d\geq 1$ such that $\bx = \rep{\br}{d}$; both are computable in time polynomial in $\len(\bx)$, e.g., by checking each divisor of the length of $\bx$. Moreover, for every $k\geq 1$, the primitive root of $\rep{\bx}{k}$ equals the primitive root of $\bx$.

Second, a standard polynomial-time encoding of CNF formulas as binary strings; for a binary string $\br$, we write $\varphi_\br$ for the encoded formula if $\br$ is a valid encoding (in which case we denote by $n_\br$ the number of its Boolean variables), and say that $\varphi_\br$ is undefined otherwise.
For a binary string $\by = (y_1,\ldots,y_n)\in\{0,1\}^n$, we denote by $\operatorname{val}(\by) := \sum_{i=1}^{n} y_i\, 2^{n-i}$ the integer whose $n$-bit binary representation is $\by$, and we interpret $\by$ as a truth assignment to $n$ Boolean variables in the natural way.

We now define the maximization problem in the form \eqref{eq:combinatorial-formulation}: maximize $g(\bx,\by)$ subject to $\by\in\fint(\bx)$.
The input domain is $D := \bigcup\limits_{m\geq 1}\{0,1\}^m$ (all binary vectors). In what follows, $\bx\in D$ is an input with primitive root $\br$ and exponent $d$ (so $\bx = \rep{\br}{d}$), and $\by$ is a candidate solution.
\begin{itemize}
\item The \emph{feasibility set} is
\begin{align*}
\fint(\bx) :=
\begin{cases}
	\{0,1\}^{n_\br} & \text{if $\varphi_\br$ is defined}
	~~~\text{(all truth assignments to the variables of $\varphi_\br$),}
	\\
	\{0\} & \text{otherwise}
	~~~\text{(a single dummy solution).}
\end{cases}
\end{align*}
\item The \emph{objective function} is
\begin{align*}
g(\bx, \by) :=
\begin{cases}
	d & \parbox[t]{9cm}{if $\varphi_\br$ is defined, $\by$ is a satisfying assignment of $\varphi_\br$, and $\operatorname{val}(\by) = d-2$;}
	\\[3mm]
	0 & \text{otherwise.}
\end{cases}
\end{align*}
\item The optimal value is $\optint(\bx) := \max\limits_{\by\in\fint(\bx)} g(\bx,\by)$.
\end{itemize}
In words: a candidate solution $\by$ is a truth assignment to the variables of the formula encoded by the primitive root of the input $\bx$; the solver is rewarded $d$ points if the assignment satisfies the formula \emph{and} its numeric value equals the replication exponent minus $2$; otherwise it is rewarded nothing.

It will be convenient to tuple the value of $\optint$ explicitly. For a primitive string $\br$ and an integer $k\geq 1$, define the indicator
\begin{align*}
s(\br, k) :=
\begin{cases}
1 & \parbox[t]{9cm}{if $\varphi_\br$ is defined, and the binary string with value $k-2$ is a satisfying assignment of $\varphi_\br$;}
\\[3mm]
0 & \text{otherwise.}
\end{cases}
\end{align*}
Since $\operatorname{val}$ is a bijection between $\{0,1\}^{n_\br}$ and $\{0,1,\ldots,2^{n_\br}-1\}$, at most one feasible solution can attain a nonzero objective value --- namely, the $n_\br$-bit binary representation of $d - 2$, when $0\leq d-2\leq 2^{n_\br}-1$. Hence, for every $\bx = \rep{\br}{d}$:
\begin{align}
\label{eq:appendix-optint-value}
\optint(\bx) = \max_{\by\in\fint(\bx)} g(\bx,\by) = d\cdot s(\br, d).
\end{align}

Note that $s(\br,1)=0$ for all $\br$, as no binary string (in our encoding) has value $-1$.
Hence $\optint(\bx)=0$ for any primitive binary string $\bx$.

\emph{$\optint$ is computable in polynomial time.} Given $\bx$ of length $m$: computing the primitive root $\br$ and exponent $d$ takes polynomial time; checking whether $\br$ encodes a CNF formula takes polynomial time; and if it does, then either $d-2 < 0$ or $d - 2 \geq 2^{n_{\br}}$ (in which case $\optint(\bx)=0$, since no feasible $\by$ has $\operatorname{val}(\by)=d-2$), or there is exactly one candidate to check: the algorithm constructs the $n_{\br}$-bit binary representation $\by$ of $d-2$, evaluates $\varphi_{\br}$ on it in polynomial time, and outputs $d$ if $\by$ satisfies $\varphi_\br$ and $0$ otherwise. Note also that $\optint(\bx)\leq d\leq m$, so $\optint(\bx)$ can be represented in space polynomial in $\len(\bx)$.

\emph{Computing $\optfrac$ is NP-hard.} Let $\bx$ be a \emph{primitive} binary string that validly encodes a CNF formula $\varphi_\bx$.%
\footnote{
Primitivity can be guaranteed by the encoding, e.g. by ending every valid encoding with a string of ones with length as the encoding length.
}
For every $k\geq 1$, the primitive root of $\rep{\bx}{k}$ is $\bx$ itself and its exponent is $k$; hence, by \eqref{eq:appendix-optint-value},
\begin{align*}
\frac{\optint(\rep{\bx}{k})}{k} = \frac{k\cdot s(\bx,k)}{k} = s(\bx,k)
\in\{0,1\}.
\end{align*}
Therefore
\begin{align*}
\optfrac(\bx) = \sup_{k\geq 1} s(\bx,k) =
\begin{cases}
1 & \text{if $\varphi_\bx$ is satisfiable (take $k := \operatorname{val}(\by)+2$, where $\by$ is a}\\
& \text{~satisfying assignment of $\varphi_\bx$);}
\\
0 & \text{if $\varphi_\bx$ is unsatisfiable.}
\end{cases}
\end{align*}
So computing $\optfrac(\bx)$ on primitive formula-encoding inputs decides SAT; hence computing $\optfrac$ is NP-hard.

Finally, we verify that $\optfrac$ is finite on all of $D$ (as required of the problems we consider): if $\bx = \rep{\br}{d}$ with $\br$ primitive, then for every $k\geq 1$, the primitive root of $\rep{\bx}{k}$ is $\br$ and its exponent is $dk$, so $\optint(\rep{\bx}{k}) = dk\cdot s(\br,dk) \leq dk$, hence $\optint(\rep{\bx}{k})/k \leq d$ for every $k$, and $\optfrac(\bx)\leq d < \infty$.
\end{proof}

\begin{proposition}
\label[proposition]{cor:p-not-in-ffptas}
The problem $\optint$ of \Cref{prop:rel-harder-than-opt} has no FFPTAS unless ${\sf P}={\sf NP}$. 
\end{proposition}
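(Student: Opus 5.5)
We reduce SAT to the FFPTAS of the problem $\optint$ from \Cref{prop:rel-harder-than-opt}, reusing the same hard inputs. Let $\bx$ be a primitive binary string that validly encodes a CNF formula $\varphi_\bx$. In the proof of \Cref{prop:rel-harder-than-opt} we showed two facts about such an input.
\begin{align*}
\optint(\bx) &= 1\cdot s(\bx,1) = 0,\\
\optfrac(\bx) &\in \{0,1\},
\end{align*}
where $\optfrac(\bx)=1$ if and only if $\varphi_\bx$ is satisfiable.

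The plan is to run FFPTAS$[\optint]$ on $\bx$ with the constant parameter $t := 1/2$. There are two cases.
\begin{itemize}
\item If $\varphi_\bx$ is satisfiable, then $(1-t)\optfrac(\bx) = 1/2 > 0 = \optint(\bx)$. No value $v$ satisfies $1/2\leq v\leq 0$, so by \Cref{def:algorithms} the FFPTAS must return None.
\item If $\varphi_\bx$ is unsatisfiable, then $(1-t)\optfrac(\bx) = 0 \leq \optint(\bx)$. Such a value exists (namely $v=0$), so the FFPTAS must return a value rather than None.
\end{itemize}
Hence $\varphi_\bx$ is satisfiable if and only if the FFPTAS returns None. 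Since $t$ is constant, the run time is polynomial in $\len(\bx)$. Encoding a CNF formula as a primitive string is polynomial, using the trailing-ones trick from the footnote. Together these give a polynomial-time algorithm for SAT, so ${\sf P}={\sf NP}$.

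The only step needing care is confirming that the FFPTAS output is fully forced in both cases. This holds because the definition makes the None/value dichotomy exact: an FFPTAS returns None precisely when $\optint(\bx) < (1-t)\optfrac(\bx)$. So the reduction does not depend on which particular $v$ is returned. We should also note the consequence: since this $\optint$ is in {\sf P}, the restriction to {\sf FracP} in Part~(1) of \Cref{thm:main} cannot be dropped.
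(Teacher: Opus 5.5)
Your proof is correct and follows the paper's argument essentially verbatim. Both run the FFPTAS with $t=1/2$ on a primitive formula-encoding input $\bx$, where $\optint(\bx)=0$ and $\optfrac(\bx)\in\{0,1\}$ according to satisfiability, so whether the output is None or a value decides SAT in polynomial time.
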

\begin{proof}
Suppose the problem had an FFPTAS, and run it with $t = 1/2$ on a primitive input $\bx$ encoding a CNF formula $\varphi_\bx$. By \eqref{eq:appendix-optint-value}, $\optint(\bx) = 1\cdot s(\bx,1) = 0$ (since $s(\bx,1)=0$: no assignment $\by$ has $\operatorname{val}(\by) = -1$), while, as computed in the proof of \Cref{prop:rel-harder-than-opt}, $\optfrac(\bx)\in\{0,1\}$ according to the satisfiability of $\varphi_\bx$.
\begin{itemize}
\item If $\varphi_\bx$ is satisfiable, then $(1-t)\optfrac(\bx) = 1/2 > 0 = \optint(\bx)$, so the FFPTAS must return None.
\item If $\varphi_\bx$ is unsatisfiable, then $(1-t)\optfrac(\bx) = 0 = \optint(\bx)$, so the FFPTAS must return the value $0$.
\end{itemize}
Hence the FFPTAS output (None vs.\ a value) decides the satisfiability of $\varphi_\bx$ in polynomial time, which is impossible unless ${\sf P}={\sf NP}$.
\end{proof}

Combining \Cref{prop:rel-harder-than-opt}
and \Cref{cor:p-not-in-ffptas} implies that ${\sf P}\not\subseteq {\sf FFPTAS}$ (unless ${\sf P}={\sf NP}$), and the restriction of \Cref{thm:main} to fractionally-polynomial problems is necessary.

\section{Encoding a minimization problem on graphs}
\label[appendix]{sec:rsp}
The partition problems of \Cref{thm:main} have a particularly convenient input format: the input is a plain list of item sizes, so the replication $\rep{\bx}{k}$ automatically has the intended meaning of ``$k$ independent copies of the instance sharing the same bins''. 
But for other problems, the encoding of the input must be chosen with care. In this section we work out such an example in full: the \emph{Restricted Shortest Path} (RSP) problem.

In this problem, we are given a directed graph in which every edge $e$ has a nonnegative integer \emph{cost} $c_e$ and a nonnegative integer \emph{delay} $d_e$, and a nonnegative integer \emph{delay budget} $B$. The goal is to find a path between two pre-designated nodes $s$ and $t$, of minimum total cost among the paths whose total delay is at most $B$.
RSP is NP-hard (it appears in Garey and Johnson \cite{gareycomputers} as problem ND30, ``Shortest weight-constrained path''). It has an FPTAS, due to Hassin \cite{hassin1992approximation}, later simplified and improved by Lorenz and Raz \cite{lorenz2001simple}.

In the fractional relaxation of the problem, we are allowed to choose fractions of paths, such that the total amount of path fractions from $s$ to $t$ is $1$, the total delay is at most $B$, and the total cost is minimized. This allows smaller total costs. For example, if there are two paths, one with cost $5$ and delay $1$ and one with cost $1$ and delay $5$, and the budget is $3$, the discrete variant must choose the one with cost $5$, whereas the fractional variant can choose $1/2$ of each path, attaining a cost of $3$.
The fractional relaxation can be solved in polynomial time by the ellipsoid method, as well as by a combinatorial algorithm \cite{handler1980dual,mehlhorn2000resource}.


Since Restricted Shortest Path is a minimization problem, we first state the minimization analogues of our definitions.

\subsection{Minimization problems}
\label{sec:rsp-min-defs}

A \emph{minimization problem} is a function $\optint: D\to\Q_{\geq 0}$, where $D = \bigcup_{m\geq 1} X^m$ for some set $X$, exactly as in \Cref{def:problem}. The amortized optimum is defined with an infimum instead of a supremum:
\begin{align}
\label{eq:optfrac-min}
\optfrac(\bx) := \inf_{k\geq 1} \frac{\optint(\rep{\bx}{k})}{k}.
\end{align}
Taking $k=1$ shows that $\optfrac(\bx)\leq \optint(\bx)$ for all $\bx\in D$: as in the maximization case, the amortized optimum is a relaxation of the optimum.
The algorithmic notions are adapted accordingly:
\begin{itemize}
\item An \emph{FPTAS} for a minimization problem $\optint$ is an algorithm that, given $\bx\in D$ and $\epsilon > 0$, runs in time polynomial in $\len(\bx)$ and $1/\epsilon$, and returns a value $v$ such that $\optint(\bx)\leq v\leq (1+\epsilon)\cdot\optint(\bx)$.
\item An \emph{FFPTAS} for a minimization problem $\optint$ is an algorithm that, given $\bx\in D$ and $t > 0$, runs in time polynomial in $\len(\bx)$ and $1/t$, and returns a value $v$ such that $\optint(\bx)\leq v\leq (1+t)\cdot\optfrac(\bx)$ (if such a value exists, that is, if $\optint(\bx)\leq (1+t)\cdot\optfrac(\bx)$), or None (otherwise).
\end{itemize}
As in \Cref{rem:feasible-value}, the requirement $v\geq\optint(\bx)$ plays the role of ``$v$ is the value of a feasible solution'': for a minimization problem, the value of any feasible solution is at least the optimal value.
A minimization problem is \emph{fractionally-polynomial} under the same two conditions as before: $\optfrac(\bx)$ is computable in time polynomial in $\len(\bx)$, and $\optint(\bx)$ is representable in space polynomial in $\len(\bx)$.

\subsection{Input encoding}
\label{sec:rsp-encoding}
The base set $X$ contains tuples of two different kinds:
\begin{itemize}
\item Edge tuples $(Edge: u, v, c, d)$, denoting a directed edge from vertex $u$ to vertex $v$, with cost $c$ and delay $d$. The multiset of edge tuples defines a directed multigraph $G(\bx)$, whose vertex set is the set of integers appearing in the $u,v$ fields of the tuples of $\bx$.
\item Demand tuples $(Demand: b)$, denoting a request to route a path from the source $s$ to the target $t$, with a budget of $b$. We denote by $r(\bx)$ the number of demand tuples in $\bx$, and by $B(\bx)$ the sum of $b$ in all demand tuples.
\end{itemize}
A \emph{path} always means a simple directed path from $s$ to $t$ in $G(\bx)$.
We denote by ${\cal P}$ the (finite) set of paths, and for a path $P$ we write $c(P)$ and $d(P)$ for its total cost and total delay. 
A \emph{solution} for $\bx$ is a selection of $r$ paths $P_1,\ldots,P_r\in{\cal P}$, not necessarily distinct (one path per demand); it is \emph{feasible} if $\sum_{j=1}^r d(P_j)\leq B(\bx)$. We define:
\begin{align}
\label{eq:rsp-optint}
\optint(\bx) :=
\begin{cases}
\displaystyle
\min\Big\{\sum_{j=1}^{r} c(P_j) ~:~ (P_1,\ldots,P_r) \text{ is a feasible solution for } \bx\Big\}
\\
~~~~~~~~~~ \text{if a feasible solution exists;}
\\
0 ~~~~~~~~ \text{otherwise.}
\end{cases}
\end{align}
We call $\bx$ \emph{feasible} in the first case and \emph{infeasible} in the second (an input with no demand tuples is feasible, with $\optint(\bx)=0$, by the convention that an empty sum is $0$). A standard RSP instance with budget $B$ corresponds to a feasible input with a single demand tuple carrying $b = B$; the convention $\optint := 0$ on infeasible inputs merely makes $\optint$ total on $D$, and such inputs are recognizable in polynomial time (\Cref{lem:rsp-feasibility} below).

The key point of this encoding is the semantics of replication. The input $\rep{\bx}{k}$ decodes to: the multigraph $G(\bx)$ with every edge duplicated $k$ times (parallel edges with identical costs and delays, which are clearly immaterial for path selection), $kr$ demands, and the pooled budget $k\cdot B(\bx)$. Hence:
\begin{align}
\label{eq:rsp-replication}
\optint(\rep{\bx}{k}) = \min\Big\{ \sum_{j=1}^{k r(\bx)} c(P_{j}) ~:~ P_1,\ldots,P_{k r(\bx)}\in{\cal P}, \text{ and } \sum_{j=1}^{k r(\bx)} d(P_{j}) \leq k\cdot B(\bx)\Big\},
\end{align}
when a selection satisfying the constraint exists.
The paths chosen for the different demands may differ; it is exactly this freedom, combined with the pooled budget, that makes the amortized optimum a nontrivial relaxation.

\begin{lemma}
\label{lem:rsp-feasibility}
For $\bx\in D$, let $\mu$ denote the minimum delay of a path in ${\cal P}$ (and $\mu := \infty$ if ${\cal P}$ is empty). Then:

(a) $\bx$ is feasible if and only if $r(\bx) = 0$ or $\mu\cdot r(\bx) \leq B(\bx)$; 

(b) This condition can be checked in polynomial time; and  ---

(c) $\bx$ is feasible if and only if $\rep{\bx}{k}$ is feasible, for every $k\geq 1$.
\end{lemma}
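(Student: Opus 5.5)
The plan is to prove the three parts in order, reducing everything to the single quantity $\mu$, the minimum delay of an $s$--$t$ path in $G(\bx)$.

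\emph{Part (a).} If $r(\bx)=0$, the empty selection is feasible by convention. Otherwise, suppose $r:=r(\bx)\geq 1$.
\begin{itemize}
\item If $\mu\cdot r\leq B(\bx)$, let $P^*$ be a path of delay $\mu$. Choosing $P_j:=P^*$ for every demand gives total delay $r\mu\leq B(\bx)$, so this selection is feasible.
\item Conversely, any feasible selection $(P_1,\ldots,P_r)$ satisfies $\sum_j d(P_j)\geq r\mu$, since each path has delay at least $\mu$. Hence $r\mu\leq B(\bx)$.
\end{itemize}
If ${\cal P}$ is empty then $\mu=\infty$, and with $r\geq 1$ no selection exists at all. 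This agrees with the condition, since $\infty\cdot r > B(\bx)$.

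\emph{Part (b).} First scan the tuples of $\bx$ to count the demands $r(\bx)$ and sum their budgets to get $B(\bx)$; both take linear time, and the numbers involved have polynomial bit length. Next compute $\mu$ by running Dijkstra's algorithm from $s$ with the nonnegative delays $d_e$ as edge weights; it returns $\infty$ if $t$ is unreachable.

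One subtlety is that ${\cal P}$ consists of \emph{simple} paths, whereas Dijkstra returns a shortest walk. Because delays are nonnegative, any shortest walk can be shortcut to a simple path of no larger delay, so the two minima coincide. Finally, multiply $\mu\cdot r(\bx)$ and compare it with $B(\bx)$; this arithmetic is polynomial in $\len(\bx)$.

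\emph{Part (c).} Apply part (a) to $\rep{\bx}{k}$, which requires three observations about replication:
\begin{itemize}
\item Replication duplicates every edge $k$ times with identical cost and delay, and adds no new vertices. So the set of path delays is unchanged, and $\mu(\rep{\bx}{k})=\mu(\bx)$. Parallel copies yield distinct paths in the multigraph, but with the same delays.
\item $r(\rep{\bx}{k})=k\,r(\bx)$.
\item $B(\rep{\bx}{k})=k\,B(\bx)$.
\end{itemize}
The condition of (a) for $\rep{\bx}{k}$ therefore reads: $k\,r(\bx)=0$, or $\mu\cdot k\,r(\bx)\leq k\,B(\bx)$. Dividing by $k\geq 1$ shows this is equivalent to the condition for $\bx$.

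The only delicate points are the $\mu=\infty$ convention and the simple-path versus walk distinction; both are handled above.
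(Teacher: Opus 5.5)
Your proof is correct and follows essentially the same route as the paper. In (a) you repeat the minimum-delay path $r$ times and use the lower bound $r\mu$; in (b) you run a shortest-path computation on the nonnegative delays; in (c) you note that replication multiplies $r$ and $B$ by $k$ and leaves $\mu$ unchanged. Your explicit handling of the $\mu=\infty$ convention and of simple paths versus walks makes precise points the paper leaves implicit.
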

\begin{proof}
(a) If $r(\bx)=0$ then $\bx$ is feasible by definition; if $\mu\cdot r(\bx) \leq B(\bx)$ then selecting the minimum-delay path $r$ times is a feasible solution; if both these conditions do not  hold, then there is no solution satisfying the budget constraints.

(b) The value $\mu$ is computable in polynomial time by a shortest-path computation with respect to the (nonnegative) delays. 

(c) By \eqref{eq:rsp-replication}, the minimum possible total delay in $\rep{\bx}{k}$ is $\mu \cdot k r(\bx)$, and the budget is $k B(\bx)$; so feasibility of $\rep{\bx}{k}$ is the condition $\mu \cdot k r(\bx) \leq k B(\bx)$, which is equivalent to feasibility of $\bx$.
\end{proof}

\subsection{The amortized optimum equals the fractional path relaxation}
\label{sec:rsp-amortized}

The classical fractional relaxation of RSP
can be presented by the following linear program:
\begin{align}
\label{eq:rsp-lp}
\operatorname{OptLP}(\bx) := \min & \sum_{P\in{\cal P}} \lambda_P\cdot  c(P)
\\
\notag
\text{subject to} 
&
\sum_{P\in{\cal P}} \lambda_P = r(\bx),
\\
\notag
&
\sum_{P\in{\cal P}} \lambda_P\cdot  d(P) \leq B(\bx),
\\
\notag
&
\lambda_P \geq 0 ~~~~~ \forall P\in{\cal P},
\end{align}
with the convention $\operatorname{OptLP}(\bx) := 0$ if the program is infeasible or $r(\bx) = 0$. 
Note that the program \eqref{eq:rsp-lp} is feasible exactly when $\bx$ is feasible: a fractional routing minimizing the total delay puts all its mass on minimum-delay paths, so feasibility of \eqref{eq:rsp-lp} is again the condition $\mu r(\bx)\leq B(\bx)$ of \Cref{lem:rsp-feasibility}. Since ${\cal P}$ is finite, the feasible region of \eqref{eq:rsp-lp} is a (compact) polytope, and the minimum is attained.

\begin{theorem}
\label{thm:rsp-optfrac}
For every $\bx\in D$:~ $\optfrac(\bx) = \operatorname{OptLP}(\bx)$. Moreover, the infimum in \eqref{eq:optfrac-min} is attained.
\end{theorem}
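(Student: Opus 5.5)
We prove the two inequalities $\optfrac(\bx)\geq \operatorname{OptLP}(\bx)$ and $\optfrac(\bx)\leq \operatorname{OptLP}(\bx)$, after first disposing of the degenerate cases. If $r(\bx)=0$, then $\rep{\bx}{k}$ has no demands for every $k$, so $\optint(\rep{\bx}{k})=0$ for all $k$ and both sides are $0$, with the infimum attained at $k=1$. If $\bx$ is infeasible, then by \Cref{lem:rsp-feasibility}(c) every $\rep{\bx}{k}$ is infeasible, so $\optint(\rep{\bx}{k})=0$ by convention. The LP \eqref{eq:rsp-lp} is then also infeasible, as noted after its statement, so $\operatorname{OptLP}(\bx)=0$ and again both sides vanish. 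From now on we assume $\bx$ is feasible and $r:=r(\bx)\geq 1$.

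\emph{Lower bound ($\optfrac\geq\operatorname{OptLP}$).} Fix $k\geq 1$ and an optimal selection $P_1,\ldots,P_{kr}$ for $\rep{\bx}{k}$, which exists by \eqref{eq:rsp-replication} and \Cref{lem:rsp-feasibility}(c). Define $\lambda_P := \frac{1}{k}\cdot|\{j : P_j = P\}|$. Then $\sum_P\lambda_P = r$ and $\sum_P \lambda_P d(P) = \frac1k\sum_j d(P_j)\leq B(\bx)$, so $\lambda$ is LP-feasible. Its cost is $\frac1k\optint(\rep{\bx}{k})$, hence $\optint(\rep{\bx}{k})/k\geq \operatorname{OptLP}(\bx)$ for every $k$. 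Taking the infimum gives the bound.

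\emph{Upper bound and attainment.} The key step is to show that $\operatorname{OptLP}(\bx)$ is attained at a \emph{rational} optimal solution $\lambda^*$. The LP has finitely many variables and rational (integer) data, and its minimum is attained on a nonempty polytope, so the minimum is attained at a vertex. By standard LP theory (e.g.\ \cite{schrijver1998theory}), every vertex is the unique solution of a nonsingular subsystem of the constraints with integer coefficients. By Cramer's rule it is therefore rational. Let $k$ be a common denominator of the entries of $\lambda^*$, and set $n_P := k\lambda^*_P\in\Z_{\geq 0}$. Then $\sum_P n_P = kr$, so we can form a selection for $\rep{\bx}{k}$ consisting of $n_P$ copies of each path $P$. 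Its total delay is $k\sum_P\lambda^*_P d(P)\leq kB(\bx)$, so it is feasible, and its total cost is $k\cdot\operatorname{OptLP}(\bx)$. Thus $\optint(\rep{\bx}{k})/k\leq\operatorname{OptLP}(\bx)$. Combined with the lower bound, we get equality at this specific $k$, so the infimum is attained.

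The main obstacle is this rationality and vertex step: we must make sure that the LP (with exponentially many path variables, but finitely many) has an optimal vertex, and that the vertex is rational. Since each vertex has at most two nonzero coordinates (there are only two non-sign constraints), one can even argue directly. Either a single path $P$ is used with $\lambda_P=r$, or two paths $P,Q$ are used, with $\lambda$ obtained by solving the $2\times2$ integer system given by the equality constraint and the tight budget constraint, which yields rational values. We would present this elementary argument to keep the proof self-contained.
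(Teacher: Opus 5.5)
Your proposal is correct and follows essentially the same route as the paper. It dispatches the degenerate cases via \Cref{lem:rsp-feasibility}, gets the lower bound from the frequency vector $\lambda_P = |\{j : P_j = P\}|/k$, and gets the upper bound and attainment by scaling a rational optimal vertex of \eqref{eq:rsp-lp} by a common denominator and replicating paths accordingly. The only difference is that the paper uses the at-most-two-nonzero structure to name the denominator explicitly as $q = |d(P_1)-d(P_2)| \leq \Delta$, which has polynomial length. Your general Cramer's-rule argument only needs some common denominator to exist, and that suffices for the statement as posed.
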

\begin{proof}
If $\bx$ is infeasible, then by \Cref{lem:rsp-feasibility} every $\rep{\bx}{k}$ is infeasible, so $\optint(\rep{\bx}{k}) = 0$ for all $k$ and $\optfrac(\bx) = 0 = \operatorname{OptLP}(\bx)$ (both by convention; the infimum is attained at $k=1$). The same holds if $r(\bx)=0$.

Assume from now on that $\bx$ is feasible and $r(\bx)\geq 1$; then all $\rep{\bx}{k}$ are feasible as well, and both quantities are given by the minima in \eqref{eq:rsp-replication} and \eqref{eq:rsp-lp}.
(Formally, a path in $G(\rep{\bx}{k})$ may use different parallel copies of the same edge of $G(\bx)$; since the copies have identical costs and delays, we may identify every such path with the corresponding path of $G(\bx)$, and we do so throughout.)

\emph{Lower bound ($\operatorname{OptLP}(\bx) \leq \optint(\rep{\bx}{k})/k$ for every $k\geq 1$).}
Let $(P_{1},\ldots,P_{kr})$ be an optimal solution of $\rep{\bx}{k}$, as in \eqref{eq:rsp-replication}. Define, for each path $P\in{\cal P}$:
\begin{align*}
\lambda_P := \frac{|\{j : P_{j} = P\}|}{k}.
\end{align*}
Then $\sum_P \lambda_P = kr/k = r$, and
\begin{align*}
\sum_{P} \lambda_P\, d(P) = \frac{1}{k}\sum_{j=1}^{kr} d(P_{j}) \leq \frac{1}{k}\cdot k B(\bx) = B(\bx),
\end{align*}
and $\lambda_P\geq 0$,
so $\lambda$ is feasible for \eqref{eq:rsp-lp}. Its objective value is
$\sum_{P} \lambda_P\, c(P) = \frac{1}{k}\sum_{j} c(P_{j}) = \optint(\rep{\bx}{k})/k$.
Hence $\operatorname{OptLP}(\bx)\leq \optint(\rep{\bx}{k})/k$. Taking the infimum over $k$ gives $\operatorname{OptLP}(\bx)\leq\optfrac(\bx)$.

\emph{Upper bound ($\optint(\rep{\bx}{q}) \leq q\cdot\operatorname{OptLP}(\bx)$ for some integer $q\geq 1$ of length polynomial in $\len(\bx)$).}
The program \eqref{eq:rsp-lp} is a linear program in the variables $(\lambda_P)_{P\in{\cal P}}$, with two constraints other than the nonnegativity constraints. 
Hence, it has an optimal solution $\lambda^*$ in which at most two variables are nonzero 
(see e.g. \cite{matousek2007understanding}).

\begin{itemize}
\item If $\lambda^*$ has a single nonzero variable, say on path $P_1$, then $\lambda^*_{P_1} = r(\bx)$ by the first constraint of \eqref{eq:rsp-lp}, which is an integer. Set $q := 1$.
\item If $\lambda^*$ has exactly two nonzero coordinates, say on paths $P_1$ and $P_2$, then both constraints of \eqref{eq:rsp-lp} are tight at $\lambda^*$ (as it is a corner of the feasible region), so $(\lambda^*_{P_1}, \lambda^*_{P_2})$ is the unique solution of the system
\begin{align*}
\lambda_{P_1} + \lambda_{P_2} = r(\bx) && d({P_1})\, \lambda_{P_1} + d({P_2})\, \lambda_{P_2} = B(\bx),
\end{align*}
whose coefficients and right-hand sides are integers.
{Moreover, we can assume that $d({P_1}) \neq  d({P_2})$, otherwise we could move all mass to a path with the smaller cost and land at the previous case of a single non-zero variable.} 
By Cramer's rule, $\lambda^*_{P_1}$ and $\lambda^*_{P_2}$ are rational numbers with common denominator $q := |d({P_1}) - d({P_2})|$, which satisfies $1\leq q \leq \Delta$, where $\Delta$ denotes the sum of all edge delays in $\bx$ (see, e.g., \cite{schrijver1998theory} for these standard facts). In particular, $\len(q)\leq\len(\bx)$.
\end{itemize}
In both cases, $q\, \lambda^*_P$ is a nonnegative integer for every $P\in{\cal P}$, and $\sum_P q\, \lambda^*_P = q r(\bx)$.

Construct a solution for $\rep{\bx}{q}$ where each $P\in{\cal P}$ appears $q\, \lambda^*_P$ times. 
The total delay is
$\sum_P q\, \lambda^*_P\, d(P) \leq q B(\bx)$, so the solution is feasible for $\rep{\bx}{q}$; and its total cost is
$\sum_P q\, \lambda^*_P\, c(P) = q\cdot\operatorname{OptLP}(\bx)$. Hence $\optint(\rep{\bx}{q}) \leq q\cdot\operatorname{OptLP}(\bx)$, so $\optfrac(\bx)\leq \operatorname{OptLP}(\bx)$.

Combining the two bounds, $\optfrac(\bx) = \operatorname{OptLP}(\bx)$; and since by the lower bound every term $\optint(\rep{\bx}{k})/k$ is at least $\operatorname{OptLP}(\bx)$, while by the upper bound the term at $k=q$ is at most (hence exactly) $\operatorname{OptLP}(\bx)$, the infimum is attained at $k=q$.
\end{proof}

The results in \cite{mehlhorn2000resource} imply that OptLP can be solved in polynomial time when $r(\bx)=1$. The following simple lemma shows a reduction from arbitrary $r(\bx)$ to $r=1$. 
\begin{lemma}
\label{lem:eersp-layering}
Let $\bx\in D$ be feasible with $r(\bx)\geq 1$ demands. Construct the graph $G^{(r(\bx))}$ consisting of $r(\bx)$ copies of $G(\bx)$ in series: for each $j\in\{1,\ldots,r(\bx)-1\}$, add an edge with cost $0$ and delay $0$ from the $t$-vertex of copy $j$ to the $s$-vertex of copy $j+1$. Then

(a)  $\optint(\bx)$ equals the minimum cost of a path in $G^{(r(\bx))}$, from the $s$-vertex of copy $1$ to the $t$-vertex of copy $r(\bx)$, whose delay is at most $B(\bx)$.

(b) Similarly, $\optfrac(\bx)$ equals the minimum cost of a fractional path in $G^{(r(\bx))}$ whose delay is at most $B(\bx)$.
\end{lemma}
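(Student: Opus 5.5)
The plan is to set up an explicit correspondence between solutions of $\bx$, as in \eqref{eq:rsp-replication} with $k=1$, and $s$--$t$ paths in $G^{(r)}$, where $r:=r(\bx)$. The correspondence should preserve both total cost and total delay.

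For part (a), I would first go from solutions to paths. Given a feasible solution $(P_1,\ldots,P_r)$, embed $P_j$ into copy $j$ of $G(\bx)$. The $r$ embedded paths are then linked by the zero-cost, zero-delay connector edges. The result is a path from the $s$-vertex of copy $1$ to the $t$-vertex of copy $r$. It is simple because the copies are vertex-disjoint and each $P_j$ is simple. Its cost is $\sum_j c(P_j)$ and its delay is $\sum_j d(P_j)\le B(\bx)$.

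Conversely, take any such path $Q$ in $G^{(r)}$. The only edges between copies are the connectors, and they all go forward from copy $j$ to copy $j+1$. So $Q$ must traverse the copies in order, using each connector exactly once. Hence $Q$ decomposes into $r$ segments, where segment $j$ is a simple $s$--$t$ path $P_j$ in copy $j$. Since the connectors contribute nothing, the cost and delay of $Q$ equal $\sum_j c(P_j)$ and $\sum_j d(P_j)$. The two minima therefore coincide; feasibility of $\bx$ guarantees that both are over nonempty sets.

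For part (b), I would interpret a fractional path in $G^{(r)}$ as a convex combination $\sum_Q \theta_Q Q$ of paths, with total delay at most $B(\bx)$. Then I would invoke \Cref{thm:rsp-optfrac}, which gives $\optfrac(\bx)=\operatorname{OptLP}(\bx)$, and compare both sides with the LP \eqref{eq:rsp-lp}.

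From the fractional path to the LP, each $Q$ corresponds to a tuple $(P^Q_1,\ldots,P^Q_r)$. Set $\lambda_P:=\sum_Q\theta_Q\,|\{j:P^Q_j=P\}|$. This gives $\sum_P\lambda_P=r$, and both cost and delay are preserved. So $\lambda$ is LP-feasible with the same cost.

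From the LP to the fractional path, take an LP solution $\lambda$. Build the fractional path that in every copy uses the distribution $\lambda/r$ over paths, independently across copies: $\theta_Q:=\prod_j \lambda_{P^Q_j}/r$. By linearity of expectation, its expected cost is $r\cdot\sum_P(\lambda_P/r)c(P)=\sum_P\lambda_P c(P)$. The delay is likewise $\sum_P\lambda_P d(P)\le B(\bx)$. The two optima are thus equal.

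The main obstacle is making this product construction rigorous. A simpler alternative would be to use the same path in all copies with weight $\lambda_P/r$, which is a valid convex combination and gives the same linear cost and delay. I would present this simpler diagonal construction to avoid product measures.
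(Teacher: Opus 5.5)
Your proof is correct. Part (a) and the direction ``fractional path in $G^{(r(\bx))}$ $\Rightarrow$ LP point'' of part (b) are essentially the paper's argument: you cut each layered path into its $r(\bx)$ segments and aggregate the weights. You also make explicit the appeal to \Cref{thm:rsp-optfrac}, which the paper uses only tacitly.

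The genuine difference is the direction ``LP point $\Rightarrow$ fractional path''. The paper splits every $\lambda_P>1$ into pieces of mass at most $1$, then greedily stitches pieces, copy by copy, into layered paths of a common weight. As literally stated this can get stuck. For example, take $r(\bx)=2$ and pieces of mass $0.6,0.6,0.8$: whatever choices are made, a lone piece is eventually left over. So it implicitly requires laying the pieces end to end on $[0,r(\bx)]$ and cutting at the integers, so that each copy receives mass exactly $1$.

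Your diagonal construction puts weight $\lambda_P/r(\bx)$ on the layered path $Q_P$ that repeats $P$ in every copy. It gives total weight $1$, cost $\sum_P\lambda_P c(P)$ and delay $\sum_P\lambda_P d(P)\le B(\bx)$ in one line. It needs no bookkeeping, and its support is no larger than that of $\lambda$.

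What the repaired stitching buys in exchange is that an integral $\lambda$ is sent to a single integral layered path, so (b) visibly extends (a). The diagonal map instead sends, e.g., $\lambda_{P_1}=\lambda_{P_2}=1$ with $r(\bx)=2$ to $\frac{1}{2}Q_{P_1}+\frac{1}{2}Q_{P_2}$. Since the lemma only compares optimal values, this does not matter.

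Finally, your product construction was never an obstacle. It is a finite product distribution on $r(\bx)$-tuples of paths, and linearity of expectation there is just an interchange of finite sums.
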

\begin{proof}
(a) A selection of $r(\bx)$ paths in $G(\bx)$ corresponds to a walk through $G^{(r(\bx))}$ that crosses the copies in order, with the same total cost and delay; conversely, any walk in $G^{(r(\bx))}$ between the two designated vertices crosses each copy in a walk from its $s$-vertex to its $t$-vertex, which 
yields a selection of $r(\bx)$  paths with the same total cost and delay. 
Hence the two minima coincide.

(b) Let $(\lambda_P)_{P\in\mathcal{P}}$ be a distribution of total mass $r(\bx)$ among paths in $G(\bx)$. We can construct from it a distribution of total mass $1$ among paths in $G^{(r(\bx))}$, with the same total cost and delay, using a simple greedy procedure. First, replace any paths $P$ with $\lambda_P>1$ with two or more paths identical to $P$, each of which with $\lambda\leq 1$. 
Next, choose a path $P_1$ with smallest $\lambda_{P_1}$, and assign it to the first copy of $G$ in $G^{(r(\bx))}$; choose another path $P_2$, and assign a $\lambda_{P_1}$ fraction of it to the second copy of $G$; proceed this way until there is a path from the source to the target of $G^{(r(\bx))}$, with weight $\lambda_{P_1}$. Remove $P_1$, as well as a fraction $\lambda_{P_1}$ from all other participating paths, and repeat.

Conversely, given a distribution of total mass $1$ among paths in $G^{(r(\bx))}$, we can decompose each path with weight $\lambda_P$ into $r(\bx)$ different paths with the same weight $\lambda_P$. This results in a distribution of total mass $r(\bx)$ among paths in $G(\bx)$, with the same total cost and delay.
Hence the two minima coincide.
\end{proof}

\begin{corollary}
\label{cor:optfrac-polytime}
$\optfrac$ can be solved in polynomial time.
\end{corollary}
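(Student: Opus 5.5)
The plan is to compute $\optfrac(\bx)$ by chaining \Cref{thm:rsp-optfrac} and \Cref{lem:eersp-layering}(b), which reduce it to a single-demand fractional restricted shortest path problem on the layered graph $G^{(r(\bx))}$. That problem is solvable in polynomial time by the results of \cite{mehlhorn2000resource}, or by the ellipsoid method.

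\textbf{Trivial cases.} Given $\bx$, first count the demand tuples $r(\bx)$ and sum their budgets to get $B(\bx)$. Then decide feasibility using \Cref{lem:rsp-feasibility}(b): compute the minimum-delay $s$--$t$ path $\mu$ by a shortest-path computation and test whether $\mu\cdot r(\bx)\leq B(\bx)$. If $r(\bx)=0$ or $\bx$ is infeasible, the proof of \Cref{thm:rsp-optfrac} shows $\optfrac(\bx)=0$, so we output $0$.

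\textbf{Main case.} Otherwise $\bx$ is feasible with $r(\bx)\ge 1$.
\begin{itemize}
\item Build $G^{(r(\bx))}$ by taking $r(\bx)$ copies of $G(\bx)$ joined in series by zero-cost, zero-delay edges. Since $r(\bx)\le$ the number of tuples in $\bx$, this graph has size $O(r(\bx)\cdot\len(\bx))$, which is polynomial in $\len(\bx)$.
\item By \Cref{lem:eersp-layering}(b), $\optfrac(\bx)$ equals the minimum cost of a unit-mass fractional path in $G^{(r(\bx))}$ with delay at most $B(\bx)$. This is exactly the single-demand fractional relaxation \eqref{eq:rsp-lp} on a graph of polynomial size.
\item Solve it in polynomial time via \cite{mehlhorn2000resource}, as the paper states just before \Cref{lem:eersp-layering}.
\end{itemize}

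\textbf{Main obstacle: representing the LP compactly.} The LP \eqref{eq:rsp-lp} has one variable per path, so there may be exponentially many variables. If we do not simply cite \cite{mehlhorn2000resource}, the first alternative is to rewrite it as an equivalent flow LP on the arcs of $G^{(r(\bx))}$:
\begin{itemize}
\item send one unit of flow from $s$ to $t$, with flow conservation at every vertex;
\item add one linear constraint bounding the total delay by $B(\bx)$;
\item minimize the total cost.
\end{itemize}
This LP has polynomially many variables and constraints, so it is solvable in polynomial time by interior-point methods or the ellipsoid method, and its optimum is rational of polynomial bit-length.

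To show the two LPs have the same optimum:
\begin{itemize}
\item A path distribution induces a flow with the same cost and delay.
\item Conversely, decompose a flow into paths and cycles. Costs and delays are nonnegative, so dropping the cycles does not increase cost or delay. The remaining paths carry unit mass, giving a path distribution no worse than the flow.
\end{itemize}

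An alternative route to the same conclusion is the dual, which has only two variables, so the ellipsoid method can be run on it directly. Its separation oracle is a shortest-path computation with combined edge weights $c_e+\beta d_e$, as in \cite{handler1980dual}.
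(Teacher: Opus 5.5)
Your proposal is correct and is essentially the paper's proof: the paper likewise reduces $\optfrac(\bx)$ to a single-demand fractional RSP on the layered graph $G^{(r(\bx))}$ via \Cref{lem:eersp-layering}(b), and then invokes \cite{mehlhorn2000resource}. Your explicit handling of the infeasible and $r(\bx)=0$ inputs (which \Cref{lem:eersp-layering} excludes and the paper leaves implicit) fills a small gap, and your flow-LP alternative is a valid self-contained replacement for the citation --- it even works directly on $G(\bx)$ with flow value $r(\bx)$, so the layering is not needed there.
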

\begin{proof}
Use the reduction of \Cref{lem:eersp-layering} and the algorithm of \cite{mehlhorn2000resource}.
\end{proof}
This shows that the RSP problem (in our encoding) is fractionally-polynomial.
As it is NP-hard and has an FPTAS, a natural open question arises:
\begin{open}
Does the Restricted Shortest Path problem, under the pooled-budget encoding of \Cref{sec:rsp-encoding}, have an FFPTAS?
\end{open}

\subsection{Expensive-edge RSP: an NP-hard special case with an FFPTAS}
\label{sec:eersp}
Towards addressing the open problem, we show here a special case of RSP, which is still NP-hard, and has an FFPTAS. This is a minimization-side analogue of Part (2) of \Cref{thm:main}.
We call this special case \emph{Expensive-edge RSP} (EERSP) --- RSP on graphs in which every edge leaving the source is very expensive (so every path begins by paying a large fixed cost $C_0$).


For an input $\bx$, we denote by $\bar{E}(\bx)$ the \emph{set} of distinct edge tuples of $\bx$ (that is, the edge tuples after removing duplicates).

\begin{definition}[EERSP]
\label{def:eersp}
An input $\bx\in D$ belongs to the family EERSP if it is feasible and satisfies the following two conditions:
\begin{enumerate}
\item There is an integer $C_0\geq 1$ such that every edge tuple of $\bx$ whose tail is $s$ has cost $C_0$ and delay $0$, and at least one such tuple exists. We call these the \emph{offset tuples}, and the remaining tuples of $\bar{E}(\bx)$ the \emph{core tuples}.

We denote by $\bar{S}(\bx)$ the sum of the costs of the core tuples, and by $\bar{\Delta}(\bx)$ the sum of the delays of all tuples of $\bar{E}(\bx)$.
\item $C_0 \geq \max(\bar{\Delta}(\bx), 1)\cdot \bar{S}(\bx)$.
\end{enumerate}
\end{definition}
Since every path starts at $s$, and a simple path leaves $s$ exactly once, condition 1 says that every path must begin with an offset tuple, and hence contains exactly one edge of cost $C_0$ and delay $0$ leaving $s$. Condition 2 quantifies how expensive this edge must be, in terms of two quantities that depend only on the set of \emph{distinct} tuples. 
The condition uses $\max(\bar{\Delta}(\bx), 1)$ rather than just $\bar{\Delta}(\bx)$, as $\bar{\Delta}(\bx)$ might be $0$. This restriction does not lose much generality, as when $\bar{\Delta}(\bx)=0$ RSP reduces to standard shortest-path computation, which is in ${\sf P}$ and hence has an FFPTAS.

This makes the family replication-closed:

\begin{lemma}
\label{lem:eersp-closed}
For every $\bx\in D$ and every integer $k\geq 1$:~ $\bx\in$ EERSP if and only if $\rep{\bx}{k}\in$ EERSP. Moreover, membership in EERSP can be decided in polynomial time.
\end{lemma}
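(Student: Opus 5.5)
The plan is to show that every ingredient in \Cref{def:eersp} is invariant under replication. Then the equivalence follows condition by condition, with feasibility handled by \Cref{lem:rsp-feasibility}(c).

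First, I will describe $\rep{\bx}{k}$ concretely. Its tuples are those of $\bx$, each repeated $k$ times. So the set of distinct edge tuples is unchanged, $\bar{E}(\rep{\bx}{k}) = \bar{E}(\bx)$. The set of edge tuples whose tail is $s$ is also the same set of distinct tuples, now with multiplicities multiplied by $k$. The demand tuples are likewise only repeated.

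I will then go through the conditions of \Cref{def:eersp} in turn.
\begin{enumerate}
\item \emph{Feasibility.} By \Cref{lem:rsp-feasibility}(c), $\bx$ is feasible if and only if $\rep{\bx}{k}$ is feasible.
\item \emph{Condition 1.} Whether every edge tuple with tail $s$ has cost $C_0$ and delay $0$, and whether at least one such tuple exists, depends only on the set of distinct tuples. So the same $C_0$ works for $\bx$ and for $\rep{\bx}{k}$, and the offset and core tuples coincide.
\item \emph{Condition 2.} The quantities $\bar{S}$ and $\bar{\Delta}$ are defined on $\bar{E}$, which is unchanged. Hence $\bar{S}(\rep{\bx}{k}) = \bar{S}(\bx)$ and $\bar{\Delta}(\rep{\bx}{k}) = \bar{\Delta}(\bx)$, and the inequality $C_0 \geq \max(\bar{\Delta},1)\cdot\bar{S}$ holds for one exactly when it holds for the other.
\end{enumerate}
Together these give both directions of the equivalence. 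The one subtle point is item 3, and it is resolved by the definition itself: $\bar{S}$ and $\bar{\Delta}$ are taken over distinct tuples precisely so that they do not scale with $k$. Summing with multiplicity would make condition 2 fail under replication.

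For the polynomial-time claim, the steps are as follows.
\begin{enumerate}
\item Decide feasibility by \Cref{lem:rsp-feasibility}(b).
\item Deduplicate the edge tuples to obtain $\bar{E}(\bx)$, by sorting or by pairwise comparison.
\item Scan the tuples with tail $s$. Check that at least one exists and that all share the same cost $C_0\geq 1$ and have delay $0$.
\item Compute $\bar{S}$ and $\bar{\Delta}$ by summation. Both have bit-length polynomial in $\len(\bx)$, since they are sums of at most $\len(\bx)$ integers appearing in $\bx$.
\item Compare $C_0$ with $\max(\bar{\Delta},1)\cdot\bar{S}$ using one multiplication and one comparison, both polynomial in $\len(\bx)$.
\end{enumerate}
I do not expect any step of this argument to be a real obstacle.
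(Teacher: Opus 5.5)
Your proposal is correct and follows essentially the same route as the paper. Both arguments observe that replication leaves $\bar{E}$ unchanged, so conditions 1--2 and the quantities $\bar{S},\bar{\Delta}$ are invariant; both take feasibility invariance from \Cref{lem:rsp-feasibility}(c) and decide membership by scanning the tuples plus the feasibility test. Your explicit bit-length accounting for the final comparison is a detail the paper leaves implicit.
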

\begin{proof}
Replication duplicates every tuple, so $\bar{E}(\rep{\bx}{k}) = \bar{E}(\bx)$; hence conditions 1 and 2, as well as the quantities $\bar{S}$ and $\bar{\Delta}$, are invariant under replication. Feasibility is invariant under replication by \Cref{lem:rsp-feasibility}. Membership is decided by scanning the tuples (for conditions 1--2) and by the feasibility test of \Cref{lem:rsp-feasibility}.
\end{proof}

\begin{lemma}
\label{lem:eersp-path-bounds}
Let $\bx\in$ EERSP and $k\geq 1$. Then every path $P$ in $G(\rep{\bx}{k})$ satisfies
\begin{align*}
C_0 ~\leq~ c(P) ~\leq~ C_0 + \bar{S}(\bx),
\qquad\qquad
d(P) ~\leq~ \bar{\Delta}(\bx).
\end{align*}
\end{lemma}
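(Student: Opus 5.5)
We identify every path of $G(\rep{\bx}{k})$ with a path of $G(\bx)$, as in the proof of \Cref{thm:rsp-optfrac}. Parallel copies of an edge have identical cost and delay, so this identification preserves $c(P)$ and $d(P)$. Replication also leaves $\bar{E}$, $C_0$, $\bar{S}$ and $\bar{\Delta}$ unchanged (\Cref{lem:eersp-closed}). It therefore suffices to prove the three bounds for a simple $s$--$t$ path $P$ in $G(\bx)$.

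\emph{Lower bound on cost.} A path starts at $s$, so its first edge has tail $s$. By condition 1 of \Cref{def:eersp}, that edge is an offset tuple of cost $C_0$. All costs are nonnegative, so $c(P)\geq C_0$.

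\emph{Upper bound on cost.} Since $P$ is simple, it leaves $s$ exactly once. Hence it contains exactly one edge with tail $s$, which contributes $C_0$. Every other edge of $P$ has tail different from $s$, so it is not an offset tuple; its tuple is therefore a core tuple of $\bar{E}(\bx)$. We need $P$ to use each distinct core tuple at most once, and here care is needed. Two parallel copies of the same tuple share the same tail, so a simple path cannot use both, since it leaves each vertex at most once. Thus the map from edges of $P$ to distinct tuples is injective. The costs of the non-offset edges of $P$ then sum to at most the total cost of distinct core tuples, which is $\bar{S}(\bx)$. This gives $c(P)\leq C_0+\bar{S}(\bx)$.

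\emph{Upper bound on delay.} The same injectivity argument applies to all edges of $P$, including the offset edge, whose delay is $0$ anyway. So $d(P)$ is at most the sum of delays over the distinct tuples in $\bar{E}(\bx)$, which is $\bar{\Delta}(\bx)$, using that delays are nonnegative.

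The only real subtlety is the injectivity step: parallel duplicates created by replication, or by repeated tuples already in $\bx$, must not be counted twice. This is handled by simplicity of $P$, which forbids two edges with the same tail. The rest is immediate from nonnegativity of costs and delays.
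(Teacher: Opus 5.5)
Your proof is correct and follows essentially the same route as the paper. Simplicity of $P$ forces it to use at most one edge per tail vertex (the paper phrases this as at most one edge per ordered pair of vertices), so the tuples it traverses form a subset of $\bar{E}(\bx)$ with exactly one offset tuple, and the bounds follow by summing nonnegative costs and delays; your preliminary reduction from $G(\rep{\bx}{k})$ to $G(\bx)$ is harmless but unnecessary, since your injectivity argument already handles parallel copies directly, as the paper does via $\bar{E}(\rep{\bx}{k}) = \bar{E}(\bx)$.
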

\begin{proof}
A path is simple, so it visits every vertex at most once; hence it traverses at most one edge between every ordered pair of vertices, and in particular it never traverses two copies of the same tuple. Therefore, the tuples traversed by $P$ form a subset of $\bar{E}(\bx)$ (recall that $\bar{E}(\rep{\bx}{k}) = \bar{E}(\bx)$). Since $P$ starts at $s$ and leaves it exactly once, exactly one tuple of this subset has tail $s$, and by condition 1 of \Cref{def:eersp} it has cost $C_0$ and delay $0$; every other tuple of the subset is a core tuple. The claimed bounds follow by summing costs and delays over the subset: the cost is $C_0$ plus a sum of core costs (between $C_0$ and $C_0 + \bar{S}(\bx)$), and the delay is a sum of core delays (at most $\bar{\Delta}(\bx)$).
\end{proof}

We will use one more lemma, that shows how to round the fractional optimum to an integral solution with a small \emph{additive} loss.

\begin{lemma}
\label{lem:eersp-rounding}
Let $\bx\in$ EERSP with $r(\bx)\geq 1$. Then
\begin{align*}
\optint(\bx) ~\leq~ \optfrac(\bx) + r(\bx)\cdot\bar{S}(\bx),
\end{align*}
and a feasible solution of $\bx$ with total cost at most $\optfrac(\bx) + r(\bx)\cdot\bar{S}(\bx)$ can be computed in polynomial time.
\end{lemma}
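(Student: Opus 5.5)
The plan is to use the path-minimum-delay trick: every path costs between $C_0$ and $C_0+\bar{S}(\bx)$ (\Cref{lem:eersp-path-bounds}). So every feasible solution costs at least $r C_0$, and no feasible solution costs more than $rC_0 + r\bar{S}(\bx)$, where $r := r(\bx)$.

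\textbf{Step 1: lower bound on the relaxation.} We show that $\optfrac(\bx) \ge r C_0$. By \Cref{thm:rsp-optfrac}, $\optfrac(\bx)=\operatorname{OptLP}(\bx)$. Any feasible $\lambda$ in \eqref{eq:rsp-lp} has total mass $r$ on paths of cost at least $C_0$, so its objective is at least $rC_0$. This requires $\bx$ to be feasible with $r\ge 1$, which holds by the definition of EERSP and the hypothesis.

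\textbf{Step 2: a cheap feasible solution.} Let $P_\mu$ be a minimum-delay path, computable by a shortest-path computation with respect to delays as in \Cref{lem:rsp-feasibility}(b). Choose $P_\mu$ for all $r$ demands. Feasibility of $\bx$ gives $\mu r\le B(\bx)$, so this selection is feasible, and it is produced in polynomial time. Its total cost is at most $r(C_0+\bar{S}(\bx))$ by \Cref{lem:eersp-path-bounds} with $k=1$.

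\textbf{Step 3: combine.} We get
\[
\optint(\bx)\le r C_0 + r\bar{S}(\bx)\le \optfrac(\bx)+r\bar{S}(\bx).
\]
The output of Step 2 is exactly the solution the lemma asks for.

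The only delicate point is confirming that \Cref{lem:eersp-path-bounds} applies to every path counted in the LP of \Cref{thm:rsp-optfrac}. That LP ranges over the path set ${\cal P}$ of $G(\bx)$, and these are simple paths. A simple path leaves $s$ exactly once, through an offset tuple, so each such path indeed costs at least $C_0$. No rounding argument is needed. Condition~2 of \Cref{def:eersp} is not used here; it presumably matters later, to turn the additive error $r\bar{S}$ into a multiplicative $(1+t)$ guarantee.
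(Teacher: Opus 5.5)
Your proof is correct, and it takes a genuinely different and simpler route than the paper.

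The paper rounds an optimal fractional solution:
\begin{itemize}
\item It runs the Mehlhorn--Ziegelmann algorithm on the layered graph $G^{(r(\bx))}$ of \Cref{lem:eersp-layering}.
\item This gives an optimal fractional solution supported on two paths $Q_1,Q_2$ with weights $\alpha,1-\alpha$.
\item It keeps the lower-delay path $Q_1$, which is feasible because its delay is at most the weighted average.
\item It bounds the excess cost by $(1-\alpha)\big(c(Q_1)-c(Q_2)\big)\le |c(Q_1)-c(Q_2)|\le r(\bx)\bar{S}(\bx)$, using the same per-path cost window $[C_0,\,C_0+\bar{S}(\bx)]$ from \Cref{lem:eersp-path-bounds} that you use.
\end{itemize}

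You instead notice that this cost window alone already yields the lemma. Every feasible solution costs at most $r(\bx)\big(C_0+\bar{S}(\bx)\big)$, while $\optfrac(\bx)\ge r(\bx)\,C_0$. The latter is exactly \Cref{lem:eersp-scale}, which the paper proves right afterwards without using the present lemma, so there is no circularity. Hence \emph{any} feasible solution meets the bound. The minimum-delay path repeated $r(\bx)$ times is one such solution, and a single shortest-path computation produces it.

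What your route buys: it needs neither the two-path structure of the fractional optimum, nor the layered-graph reduction, nor the external algorithm, nor even the value $\optfrac(\bx)$. It also shows that the paper's remark is overstated. The paper says that computing the required solution in the branch $t\ge \bar{S}(\bx)/C_0$ of \Cref{alg:eersp-ffptas} ``requires the rounding algorithm''. In fact, returning the cost of any feasible solution suffices there.

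What the paper's rounding buys: a solution tied to the LP optimum, whose true loss $(1-\alpha)\big(c(Q_1)-c(Q_2)\big)$ can be much smaller than $r(\bx)\bar{S}(\bx)$ on particular instances. The lemma as stated does not exploit this, so for this statement your argument is the more economical one.
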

\begin{proof}
Consider the layered graph $G^{(r(\bx))}$ constructed in the proof of \Cref{lem:eersp-layering}, with a single demand and budget $B(\bx)$; as in the proof of \Cref{cor:optfrac-polytime}, the optimal value of its fractional relaxation is $\optfrac(\bx)$.
The algorithm of \cite{mehlhorn2000resource} computes, in polynomial time, this optimal value together with an optimal fractional solution supported on at most two paths of $G^{(r(\bx))}$ --- say $Q_1$ and $Q_2$, with weights $\alpha$ and $1-\alpha$ for some $\alpha\in[0,1]$
(the two paths are the endpoints of a segment of the lower convex hull of the delay--cost profiles of the paths; this two-path structure of the fractional optimum goes back to \cite{handler1980dual}).
Assume without loss of generality that $d(Q_1)\leq d(Q_2)$, and let $(P_1,\ldots,P_{r(\bx)})$ be the solution of $\bx$ that the path $Q_1$ decodes to (\Cref{lem:eersp-layering}); we return this solution. It is computed in polynomial time; it remains to bound its delay and cost.

\emph{Delay.} $d(Q_1) \leq \alpha\, d(Q_1) + (1-\alpha)\, d(Q_2) \leq B(\bx)$, by the choice of $Q_1$ as the lower-delay path and the feasibility of the fractional solution. Hence the returned solution is feasible.

\emph{Cost.} Since $\optfrac(\bx) = \alpha\, c(Q_1) + (1-\alpha)\, c(Q_2)$,
\begin{align*}
c(Q_1) ~=~ \optfrac(\bx) + (1-\alpha)\big(c(Q_1) - c(Q_2)\big)
~\leq~ \optfrac(\bx) + \big|c(Q_1) - c(Q_2)\big|.
\end{align*}
Now, each of $Q_1, Q_2$ decodes to $r(\bx)$ paths of $G(\bx)$, and the connector edges of $G^{(r(\bx))}$ have cost $0$; so by \Cref{lem:eersp-path-bounds}, applied to each of the $r(\bx)$ paths,
\begin{align*}
r(\bx)\cdot C_0 ~\leq~ c(Q_i) ~\leq~ r(\bx)\cdot\big(C_0 + \bar{S}(\bx)\big)
&& \text{for } i\in\{1,2\}.
\end{align*}
Both costs lie in an interval of length $r(\bx)\cdot\bar{S}(\bx)$, so $|c(Q_1) - c(Q_2)| \leq r(\bx)\cdot\bar{S}(\bx)$, and the cost of the returned integral solution is at most $\optfrac(\bx) + r(\bx)\cdot\bar{S}(\bx)$.
\end{proof}

Note where condition 1 of \Cref{def:eersp} entered: the rounding loss is bounded by the \emph{difference} of the two path costs, and since every path carries the common offset cost $C_0$ exactly once per demand, the offset cancels in the difference, which is therefore bounded by the core costs alone. The same condition also bounds the fractional optimum from below:

\begin{lemma}
\label{lem:eersp-scale}
For every $\bx\in$ EERSP:~ $\optfrac(\bx) \geq r(\bx)\cdot C_0$.
\end{lemma}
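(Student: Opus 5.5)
We plan to go through \Cref{thm:rsp-optfrac}, which gives $\optfrac(\bx) = \operatorname{OptLP}(\bx)$. It then suffices to show that every feasible solution $\lambda$ of the linear program \eqref{eq:rsp-lp} has objective value at least $r(\bx)\cdot C_0$. We would first dispose of the degenerate case $r(\bx)=0$. There $\optfrac(\bx)=0$ by convention, and the bound $0\geq 0$ is trivial. Infeasible inputs need not be considered, because membership in EERSP includes feasibility by \Cref{def:eersp}.

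For $r(\bx)\geq 1$, we take any feasible $\lambda$ for \eqref{eq:rsp-lp}. By \Cref{lem:eersp-path-bounds} with $k=1$, every path $P\in{\cal P}$ satisfies $c(P)\geq C_0$. Since $\lambda_P\geq 0$ and $\sum_P\lambda_P = r(\bx)$, we get
\begin{align*}
\sum_{P\in{\cal P}} \lambda_P\, c(P) \;\geq\; C_0 \sum_{P\in{\cal P}} \lambda_P \;=\; r(\bx)\cdot C_0 .
\end{align*}
Taking the minimum over feasible $\lambda$ gives $\operatorname{OptLP}(\bx)\geq r(\bx)\cdot C_0$, and hence the claim.

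As a sanity check that avoids the LP, one could argue directly from \eqref{eq:optfrac-min} and \eqref{eq:rsp-replication}. Every $\rep{\bx}{k}$ is feasible by \Cref{lem:rsp-feasibility}, and an optimal solution of it consists of $k\,r(\bx)$ paths. Each of these paths costs at least $C_0$ by \Cref{lem:eersp-path-bounds}. Therefore $\optint(\rep{\bx}{k})/k \geq r(\bx) C_0$ for every $k$, and the infimum obeys the same bound. This route is arguably cleaner, and it is the one we would actually write.

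There is no real obstacle. The only point needing care is that the convention $\optint:=0$ on infeasible inputs could break the bound. This is excluded because EERSP inputs are feasible and feasibility is replication-invariant.
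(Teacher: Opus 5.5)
Your proposal is correct. The first argument (bounding the LP objective by $c(P)\geq C_0$ and invoking \Cref{thm:rsp-optfrac}) is exactly the paper's proof.

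The direct route you say you would actually write is a genuinely more elementary alternative. It never uses the identity $\optfrac(\bx)=\operatorname{OptLP}(\bx)$. Instead it lower-bounds every term $\optint(\rep{\bx}{k})/k$ of the infimum directly. The paper's route in fact only needs the easy direction $\optfrac(\bx)\geq\operatorname{OptLP}(\bx)$ of \Cref{thm:rsp-optfrac}. Your direct argument is precisely that direction's counting argument with $c(P)\geq C_0$ plugged in, so you inline it rather than cite it.

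Your version relies on two things. It applies \Cref{lem:eersp-path-bounds} for arbitrary $k$, not just $k=1$; the lemma is stated in that generality, so this is fine. It also needs the replication-invariance of feasibility from \Cref{lem:rsp-feasibility}(c) to rule out the $\optint:=0$ convention, which you correctly point out. In exchange, it is self-contained and independent of the LP characterization.

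The paper's version is a one-liner once \Cref{thm:rsp-optfrac} is available, and it matches the LP-based viewpoint of the neighbouring rounding argument in \Cref{lem:eersp-rounding}. Either write-up is valid, including your handling of the $r(\bx)=0$ case.
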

\begin{proof}
By \Cref{lem:eersp-path-bounds}, $c(P)\geq C_0$ for every path $P$. Hence the objective of the path LP \eqref{eq:rsp-lp} satisfies $\sum_P \lambda_P\cdot c(P) \geq C_0 \sum_P \lambda_P = C_0\cdot r(\bx)$ for every feasible $\lambda$, and by \Cref{thm:rsp-optfrac}, $\optfrac(\bx) = \operatorname{OptLP}(\bx) \geq r(\bx)\cdot C_0$.
\end{proof}

We can now present the FFPTAS. Its two cases are separated by comparing $t$ to the ratio $\bar{S}(\bx)/C_0$. By \Cref{lem:eersp-rounding,lem:eersp-scale}, this ratio is an upper bound on the relative gap:
\begin{align*}
\frac{\optint(\bx) - \optfrac(\bx)}{\optfrac(\bx)}
~\leq~ \frac{r(\bx)\cdot\bar{S}(\bx)}{r(\bx)\cdot C_0}
~=~ \frac{\bar{S}(\bx)}{C_0}
~\leq~ \frac{1}{\max(\bar{\Delta}(\bx),1)},
\end{align*}
where the last inequality is condition 2 of \Cref{def:eersp}. So when $t$ is above the ratio, the required value is guaranteed to exist, though computing the corresponding solution is non-trivial and requires the rounding algorithm of \Cref{lem:eersp-rounding}. 

\begin{algorithm}[!h] \caption{\qquad FFPTAS$[\text{EERSP}](\bx,t)$} \label{alg:eersp-ffptas}
\begin{algorithmic}[1]
\State If $\bx\notin$ EERSP or $r(\bx)=0$: \Return $0$.
\State Compute $\optfrac(\bx)$, using \Cref{cor:optfrac-polytime}.
\If {\label{line:eersp-round} $t \geq \bar{S}(\bx)\,/\,C_0$}
\State Compute the rounded solution of \Cref{lem:eersp-rounding}; \Return its cost $v$.
\Else {~($t < \bar{S}(\bx)\,/\,C_0$)}
\State \label{line:eersp-dp} Compute $\optint(\bx)$ exactly, by the classical dynamic program for RSP on the layered graph $G^{(r(\bx))}$ of \Cref{lem:eersp-layering}, over the delay values $0,1,\ldots,\min\big(B(\bx),\, r(\bx)\cdot\bar{\Delta}(\bx)\big)$.
\State If $\optint(\bx) \leq (1+t)\cdot\optfrac(\bx)$: \Return $\optint(\bx)$; otherwise \Return None.
\EndIf
\end{algorithmic}
\end{algorithm}

\begin{theorem}
\label{thm:eersp-ffptas}
\Cref{alg:eersp-ffptas} is an FFPTAS for RSP restricted to EERSP inputs: given $\bx\in$ EERSP and $t>0$, it runs in time polynomial in $\len(\bx)$ and $1/t$, and returns a value $v$ with $\optint(\bx)\leq v\leq (1+t)\optfrac(\bx)$ if such a value exists, or None otherwise.
\end{theorem}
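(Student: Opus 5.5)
The plan is to verify correctness and running time separately for the two branches of \Cref{alg:eersp-ffptas}, split at line \ref{line:eersp-round}. Line 1 is a degenerate case. Membership in EERSP is decidable in polynomial time by \Cref{lem:eersp-closed}. When $r(\bx)=0$ we have $\optint(\bx)=\optfrac(\bx)=0$, so returning $0$ satisfies $\optint(\bx)\le 0\le(1+t)\optfrac(\bx)$. Line 2 runs in polynomial time by \Cref{cor:optfrac-polytime}.

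\emph{Rounding branch} ($t\ge \bar S(\bx)/C_0$). By \Cref{lem:eersp-rounding}, we obtain in polynomial time a feasible solution of cost $v$ with $\optint(\bx)\le v\le \optfrac(\bx)+r(\bx)\bar S(\bx)$; the left inequality holds because $v$ is the cost of a feasible solution. By \Cref{lem:eersp-scale}, $r(\bx)\bar S(\bx)=r(\bx) C_0\cdot(\bar S(\bx)/C_0)\le \optfrac(\bx)\cdot t$. Hence $v\le(1+t)\optfrac(\bx)$. In this branch a valid value always exists, and we return one, so correctness holds.

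\emph{Exact branch} ($t<\bar S(\bx)/C_0$). Here $\optint(\bx)$ is computed exactly, and we return it or None according to the defining inequality, so correctness is immediate once the dynamic program is justified. Its justification rests on \Cref{lem:eersp-layering}(a): $\optint(\bx)$ is a single-demand RSP on $G^{(r(\bx))}$ with budget $B(\bx)$. Every $s$–$t$ path there has delay at most $r(\bx)\bar\Delta(\bx)$ by \Cref{lem:eersp-path-bounds}, so capping the delay range at $\min(B(\bx),r(\bx)\bar\Delta(\bx))$ loses nothing. The standard DP ``minimum cost to reach vertex $u$ with delay exactly $\delta$'' over a graph with polynomially many vertices and edges then runs in time polynomial in $\len(\bx)$ and $r(\bx)\bar\Delta(\bx)$. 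We must compute over simple paths; since edge delays are nonnegative and costs are nonnegative, a minimum-cost walk within budget can be shortcut to a path without increasing cost or delay, so the walk-based DP suffices.

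\emph{Main obstacle: the pseudo-polynomial factor $r(\bx)\bar\Delta(\bx)$.} We have $r(\bx)\le \len(\bx)$ since each demand is a tuple of the input, so the issue is $\bar\Delta(\bx)$, which may be exponential in $\len(\bx)$. This is exactly where condition 2 of \Cref{def:eersp} enters: in this branch $1/t> C_0/\bar S(\bx)\ge \max(\bar\Delta(\bx),1)$. The factor $\bar\Delta(\bx)$ is therefore bounded by $1/t$, and the DP runs in time polynomial in $\len(\bx)$ and $1/t$. Handling the corner case $\bar S(\bx)=0$ needs a small check: the branch condition $t<0$ is then never met, so the rounding branch applies. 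Combining both branches, every returned value lies in $[\optint(\bx),(1+t)\optfrac(\bx)]$, None is returned only when no such value exists, and the total time is polynomial in $\len(\bx)$ and $1/t$.
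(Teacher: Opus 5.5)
Your proposal is correct and follows essentially the same route as the paper's proof. Both use the same two-branch split: \Cref{lem:eersp-rounding,lem:eersp-scale} give $v\le(1+t)\optfrac(\bx)$ in the rounding branch, and condition 2 of \Cref{def:eersp} gives $\bar\Delta(\bx)<1/t$, so the delay-capped dynamic program is polynomial in the exact branch. Your extra remarks are sound: shortcutting walks to simple paths, and the corner case $\bar S(\bx)=0$, which the paper handles by noting $\bar S(\bx)\ge 1$ in the exact branch.
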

\begin{proof}
Assume $\bx\in$ EERSP with $r(\bx)\geq 1$ (for $r(\bx)=0$, the output $0$ is correct, as $\optint(\bx)=\optfrac(\bx)=0$).

\underline{Case $t \geq \bar{S}(\bx)/C_0$ (Line \ref{line:eersp-round}).} By \Cref{lem:eersp-scale}, $t\cdot\optfrac(\bx) \geq t\cdot r(\bx)\, C_0 \geq r(\bx)\cdot\bar{S}(\bx)$. The value $v$ returned is the cost of an actual feasible solution of $\bx$, so $v\geq\optint(\bx)$; and by \Cref{lem:eersp-rounding},
\begin{align*}
v ~\leq~ \optfrac(\bx) + r(\bx)\cdot\bar{S}(\bx) ~\leq~ \optfrac(\bx) + t\cdot\optfrac(\bx) ~=~ (1+t)\,\optfrac(\bx).
\end{align*}
Hence $v$ satisfies both FFPTAS requirements.
All the computations in this case take time polynomial in $\len(\bx)$.

\underline{Case $t < \bar{S}(\bx)/C_0$ (Line \ref{line:eersp-dp}).} By \Cref{lem:eersp-path-bounds}, every solution of $\bx$ has total delay at most $r(\bx)\cdot\bar{\Delta}(\bx)$, so restricting the dynamic program to delay values up to $\min(B(\bx),\, r(\bx)\cdot\bar{\Delta}(\bx))$ does not change its result, and by \Cref{lem:eersp-layering} it computes $\optint(\bx)$ exactly. For the run-time, note that in this case $\bar{S}(\bx)\geq 1$, so by condition 2 of \Cref{def:eersp},
\begin{align*}
\frac{1}{t} ~>~ \frac{C_0}{\bar{S}(\bx)} ~\geq~ \max(\bar{\Delta}(\bx),1) ~\geq~ \bar{\Delta}(\bx).
\end{align*}
In this case, $\optint(\bx)$ can be computed exactly using a standard dynamic programming algorithm, that computes for each possible delay value the smallest cost for that delay. The number of total delay values in the layered graph $G^{(r(\bx))}$ is at most $r(\bx)\bar{\Delta}(\bx) +1 \leq r(\bx)/t+1$, and $r(\bx)\leq \len(\bx)$, so the algorithm runs in time polynomial in $\len(\bx)$ and $1/t$. 
Given the exact values $\optint(\bx)$ and $\optfrac(\bx)$, the final comparison implements the FFPTAS requirement literally: it returns the value $\optint(\bx)$ --- which trivially satisfies $\optint(\bx)\leq v$ and, by the comparison, $v\leq(1+t)\optfrac(\bx)$ --- exactly when such a value exists.
\end{proof}

At the same time, the expensive edge does not make the problem any easier to solve exactly:

\begin{proposition}
\label{prop:eersp-hard}
Computing $\optint(\bx)$ on EERSP inputs is NP-hard.
\end{proposition}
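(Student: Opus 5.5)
Our plan is to reduce from a known NP-hard problem whose instances already have pseudo-polynomial structure, such as {\sc Subset Sum} or {\sc Partition}. We encode it as a single-demand RSP instance and then attach the expensive offset edge. Adding a constant $C_0$ to every $s$-$t$ path shifts all path costs equally. Hence it preserves the identity of the optimal path and shifts the optimal value by exactly $C_0$ per demand, so exact computation of $\optint$ on EERSP inputs yields exact computation of the underlying RSP optimum.

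Concretely, given {\sc Subset Sum} integers $a_1,\ldots,a_n$ and a target, we will build the standard chain gadget on vertices $v_0,v_1,\ldots,v_n=t$. Between $v_{i-1}$ and $v_i$ there are two parallel routes. The ``take'' route has cost $0$ and delay $a_i$. The ``skip'' route has cost $a_i$ and delay $0$. Parallel edges are allowed, but to keep tuples distinct and the paths simple we may subdivide one route with a middle vertex. With budget $B=\sum_i a_i - T$, the minimum cost equals $\sum_i a_i$ minus the largest subset sum that is at most $\sum_i a_i - B$. Thus exact RSP optimum decides whether some subset sums exactly to the target; this is the standard reduction showing ND30 hard. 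We then add a new source $s$ with a single offset tuple $(Edge: s,v_0,C_0,0)$ and one demand tuple $(Demand: B)$.

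Next we verify membership in EERSP (\Cref{def:eersp}). The offset tuple is the only edge with tail $s$, and it has delay $0$. We set $C_0 := \max(\bar{\Delta},1)\cdot \bar{S}$, which is computable since $\bar{S}=\sum_i a_i$ and $\bar{\Delta}=\sum_i a_i$ are determined by the core tuples (and the offset tuple has delay $0$). This value is at most $(\sum_i a_i)^2$, so it has polynomial bit-length and the construction is polynomial. For feasibility, the minimum-delay path uses all skip routes and has delay $0\le B$, so by \Cref{lem:rsp-feasibility} the input is feasible. Finally $\optint(\bx)=C_0+\mathrm{OPT}_{RSP}$, because every path passes through the offset edge exactly once (\Cref{lem:eersp-path-bounds}).

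The main obstacle is a small bookkeeping point rather than a conceptual one. Condition 2 refers to sums over distinct tuples, and the offset edge's own cost must not enter $\bar{S}$; since $\bar{S}$ is defined only over core tuples, there is no circularity. We must also avoid degenerate cases (e.g.\ $\bar{S}=0$), which can be excluded by assuming all $a_i\ge1$. Once these checks are made, NP-hardness follows immediately from the NP-hardness of RSP in \cite{gareycomputers}.
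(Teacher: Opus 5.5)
Your proposal is correct and follows essentially the same route as the paper: the standard two-route chain gadget from the ND30 reduction (the paper reduces from \textsc{Partition}, a special case of your \textsc{Subset Sum}), prefixed by a single offset edge $s\to v_0$ with cost $C_0=\max(\bar{\Delta},1)\cdot\bar{S}$ and delay $0$, so every $s$--$t$ path pays $C_0$ exactly once and $\optint$ is the chain optimum shifted by $C_0$. One small slip: with budget $B$ the minimum cost is $\sum_i a_i$ minus the largest subset sum that is at most $B$ (not at most $\sum_i a_i-B$); for $B=\sum_i a_i-T$ this still equals $T$ exactly when some subset sums to $T$, so your conclusion stands.
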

\begin{proof}
By reduction from {\sc Partition} \cite{gareycomputers}. Given positive integers $a_1,\ldots,a_m$ with sum $2A$, construct the standard chain instance of RSP (the one used in the NP-hardness proof of RSP, problem ND30 of \cite{gareycomputers}): a path of vertices $u = u_0, u_1, \ldots, u_m = t$, where between $u_{i-1}$ and $u_i$ there are two parallel edge tuples, one with cost $a_i$ and delay $0$, and one with cost $0$ and delay $a_i$. A path from $u$ to $t$ chooses, for each $i$, whether to ``pay'' $a_i$ in cost or in delay; so there is a $u$--$t$ path with delay at most $A$ and cost at most $A$ if and only if $(a_1,\ldots,a_m)$ has a partition into two sets of sum $A$ each.
Note that in this graph $S(\bx) = \Delta(\bx)=2A$.

Now prepend an edge $s\to u$ with cost $C_0$ and delay $0$, with $C_0 := \Delta(\bx)\cdot S(\bx) = 2A \cdot 2A$, and add one demand tuple with budget $A$.
The instance satisfies both conditions in the EERSP definition, its length is polynomial in the length of the {\sc Partition} instance, and
\begin{align*}
\optint(\bx) \leq C_0 + A
\quad\iff\quad
(a_1,\ldots,a_m) \text{ has a partition into two sets of sum } A,
\end{align*}
since every $s$--$t$ path consists of the offset tuple followed by a $u$--$t$ path. So a polynomial-time algorithm for $\optint$ on EERSP inputs would solve {\sc Partition} in polynomial time.
\end{proof}

Finally, since EERSP is replication-closed (\Cref{lem:eersp-closed}), it yields a total minimization problem in the sense of \Cref{def:problem}, to which all the notions of this paper apply directly. Define, on the domain $D$ of \Cref{sec:rsp-encoding}:
\begin{align*}
\optint_{E}(\bx) :=
\begin{cases}
\optint(\bx) & \text{if } \bx\in \text{EERSP},
\\
0 & \text{otherwise.}
\end{cases}
\end{align*}

\begin{corollary}
\label{cor:eersp-part2}
$\optint_{E}$ is a fractionally-polynomial minimization problem that has an FFPTAS, but has no polynomial-time algorithm unless ${\sf P} = {\sf NP}$.
\end{corollary}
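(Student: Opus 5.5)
The corollary has three components: $\optint_{E}$ is fractionally-polynomial, it has an FFPTAS, and it has no polynomial-time algorithm unless ${\sf P}={\sf NP}$. I would prove them in that order. Each one should follow by combining earlier results with the replication-closure of EERSP from \Cref{lem:eersp-closed}.

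The first step is to relate the amortized optimum of $\optint_{E}$ to that of the underlying RSP problem. By \Cref{lem:eersp-closed}, $\bx\in$ EERSP if and only if $\rep{\bx}{k}\in$ EERSP for every $k\geq 1$. So for $\bx\in$ EERSP, every term $\optint_{E}(\rep{\bx}{k})/k$ equals $\optint(\rep{\bx}{k})/k$, and therefore the amortized optimum of $\optint_{E}$ equals $\optfrac(\bx)$ of RSP. For $\bx\notin$ EERSP, all replications are also outside EERSP, so every term is $0$ and the amortized optimum is $0$. This case analysis is the step I would be most careful with, since it is exactly where replication-closure is needed. Without it, the infimum defining the amortized optimum could mix EERSP and non-EERSP replications.

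Fractional polynomiality then follows directly. Membership in EERSP is decidable in polynomial time by \Cref{lem:eersp-closed}. If $\bx\notin$ EERSP we output $0$; otherwise we compute $\optfrac(\bx)$ by \Cref{cor:optfrac-polytime}. For the representability condition, $\optint(\bx)$ is either $0$ or a sum of at most $r(\bx)$ path costs. Each path cost is at most the total edge cost in $\bx$, so the value has bit-length polynomial in $\len(\bx)$, and a suitable polynomial $p$ is easy to write down.

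For the FFPTAS, I would run \Cref{alg:eersp-ffptas}. On inputs outside EERSP it returns $0$, which is correct because $\optint_{E}(\bx)=0=\optfrac(\bx)$ there, so $0\le 0\le (1+t)\cdot 0$. On EERSP inputs, correctness and running time are given by \Cref{thm:eersp-ffptas}, using the identification of amortized optima from the first step.

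Finally, hardness: \Cref{prop:eersp-hard} produces EERSP instances on which $\optint_{E}=\optint$. A polynomial-time algorithm for $\optint_{E}$ would therefore decide {\sc Partition}, so none exists unless ${\sf P}={\sf NP}$.
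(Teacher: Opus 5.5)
Your proposal is correct and follows essentially the same route as the paper: replication-closure of EERSP (\Cref{lem:eersp-closed}) identifies the amortized optimum of $\optint_{E}$ with that of RSP on EERSP inputs and with $0$ elsewhere. Fractional polynomiality, the FFPTAS (\Cref{alg:eersp-ffptas} with \Cref{thm:eersp-ffptas}), and hardness (\Cref{prop:eersp-hard}) then follow as you describe, and your path-cost bound justifies representability in slightly more detail than the paper's one-line remark. One small notational slip: off EERSP, the value $\optint_{E}(\bx)=0$ should be compared with the amortized optimum of $\optint_{E}$, not with RSP's $\optfrac(\bx)$, which need not be $0$ there.
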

\begin{proof}
By \Cref{lem:eersp-closed}, $\bx\in$ EERSP if and only if $\rep{\bx}{k}\in$ EERSP; hence for $\bx\in$ EERSP, all the terms of \eqref{eq:optfrac-min} for $\optint_E$ coincide with those for $\optint$, so $\optfrac_E(\bx) = \optfrac(\bx)$; and for $\bx\notin$ EERSP, all the terms are $0$, so $\optfrac_E(\bx) = 0$. Fractional-polynomiality follows: decide membership in polynomial time (\Cref{lem:eersp-closed}), then either output $0$ or compute $\optfrac(\bx)$ by \Cref{cor:optfrac-polytime}; and $\optint_E(\bx)\leq\optint(\bx)$ is representable in polynomial space. \Cref{alg:eersp-ffptas} (whose first line implements the membership check) is an FFPTAS for $\optint_E$ by \Cref{thm:eersp-ffptas}. Finally, a polynomial-time algorithm for $\optint_E$ would compute $\optint$ on EERSP inputs, which is impossible unless ${\sf P}={\sf NP}$ by \Cref{prop:eersp-hard}.
\end{proof}

\Cref{cor:eersp-part2} is a minimization-side analogue of Part (2) of \Cref{thm:main}, established on a graph problem.

\end{document}